\newcommand\bigforall{\mbox{\Large $\mathsurround=1pt\forall$}}
\def\<{\langle}
\def\>{\rangle}
\newcommand{\Tr}{\mathrm{Tr}}
\def\oper{{\mathchoice{\rm 1\mskip-4mu l}{\rm 1\mskip-4mu l}
{\rm 1\mskip-4.5mu l}{\rm 1\mskip-5mu l}}}
\DeclareMathAlphabet\mathbfcal{OMS}{cmsy}{b}{n}
\mathchardef\mhyphen="2D 
\newtheorem{Definition}{Definition}
\newtheorem{Proposition}{Proposition}
\newtheorem{Example}{Example}
\definecolor{aliceblue}{rgb}{0.94, 0.97, 1.0}
\declaretheorem[
  shaded={rulecolor=black, rulewidth=1pt, bgcolor=aliceblue},
  name=Theorem,
]{Theorem}
\begin{document}

\title{Equivalence relations between conical 2-designs and mutually unbiased generalized equiangular tight frames}

\author{Katarzyna Siudzi\'{n}ska\footnote{e-mail: kasias@umk.pl}}
\affiliation{Institute of Physics, Faculty of Physics, Astronomy and Informatics, Nicolaus Copernicus University in Toru\'{n}, ul.~Grudzi\k{a}dzka 5, 87--100 Toru\'{n}, Poland}

\begin{abstract}
Quantum measurements play a fundamental role in quantum information. Therefore, increasing efforts are being made to construct symmetric measurement operators for qudit systems. A wide class of projective measurements corresponds to complex projective 2-designs, which include symmetric, informationally complete (SIC) POVMs and mutually unbiased bases (MUBs). In this paper, we establish a one-to-one correspondence between conical 2-designs and mutually unbiased generalized equiangular tight frames, both of which are common generalizations of SIC POVMs and MUBs to operators of arbitrary rank. It turns out that there exist rich families of operators that belong to only one of those two classes. This raises important questions about which symmetries have to be preserved for applicational prominence.
\end{abstract}

\flushbottom

\maketitle

\thispagestyle{empty}

\section{Introduction}

Quantum measurements are important tools in quantum information theory for the purpose of extracting information from unknown quantum states. Their various applications include e.g. quantum communication protocols \cite{Zhou,Song,Bouchard}, quantum state estimation \cite{IOC1,Adamson,Zhu_QSE}, entanglement detection \cite{ESIC,EW-SIC,KalevBae,Blume,SM_Pmaps}, steering \cite{Lai2,OperationalSICs,Bene}, and quantum tomography \cite{Prugovecki,PetzRuppert,Pimenta,Bent}. The most well known measurements are symmetric, informationally complete (SIC) POVMs \cite{Renes} and mutually unbiased bases (MUBs) \cite{Schwinger,Szarek}, which are examples of complex projective 2-designs \cite{Neumaier,Hoggar,Zauner,Scott}.

In recent years, much attention has been paid to formulate generalized quantum measurements that consist SIC POVMs or MUBs as special cases. One approach introduces semi-SIC POVMs \cite{semi-SIC} and equioverlapping measurements \cite{EOM22,EOM24,EOMq3,EOM25} -- projective measurements that go beyond complex projective 2-designs. Alternatively, one defines families of mutually unbiased symmetric measurements with arbitrary rank. $(N,M)$-POVMs describe collections of $N$ mutually unbiased POVMs, where each POVM consists of $M$ operators \cite{SIC-MUB}. Generalized symmetric measurements allow for these POVMs to have different numbers of elements \cite{SIC-MUB_general}, whereas generalized equiangular measurements (GEAMs) assume that each POVM can be taken with a different weight \cite{GEAM}. It turns out that every $(N,M)$-POVM and generalized symmetric measurement forms a conical 2-design \cite{SICMUB_design,SIC-MUB_general}, which is a generalization of complex projective 2-designs to non-projective operators \cite{Graydon,Graydon2,GraydonPhD}. This has not been observed for generalized equiangular measurements, where large families exist that go beyond conical 2-designs. However, many important discoveries could be made only for GEAMs that are conical 2-designs, like the relation between the index of coincidence and state purity \cite{GEAM} that is necessary for the derivation of entropic uncertainty relations, quantum coherence, the Brukner-Zeilinger invariants, and entanglement detection \cite{GEAM_Pmaps,GEAM_coherence}. A natural question arises: are there POVMs that form conical 2-designs but are not GEAMs? If so, can they be easily characterized?

In this paper, we analyze conical 2-designs with distinct, non-zero elements, as well as their relations to mutually unbiased generalized equiangular tight frames (MU GETFs), which are simply GEAMs rescaled by a common positive factor. Our main results are establishing the necessary and sufficient conditions for conical 2-designs to be MU GETFs. Analogical equivalence is then derived under the assumption that conical 2-designs are POVMs. Our main results are collected into Theorems 1-4. During our calculations, we find a large family of conical 2-designs that are not MU GETFs. This opens a new direction for further research.

Our paper is organized as follows. In Section 2, we recall the definition and basic properties of conical 2-designs. Section 3 introduces generalized equiangular tight frames (GETFs) as a generalization of equiangular tight frames to operators of arbitrary rank. Section 4 establishes implication relations between GETFs and conical 2-designs constructed from linearly independent operators. It also gives the most general trace conditions and a one-to-one correspondence between a class of conical 2-designs and GETFs. Next, we consider collections of linearly dependent operators that span the operator space. In Section 5, we define mutually unbiased (MU) GETFs as a straightforward generalization of generalized equiangular measurements. Then, in an analogy to Section 4, Section 6 compares a class of informationally overcomplete conical 2-designs with MU GETFs. We provide a set of conditions under which the sets of GEAMs and conical 2-designs are equivalent. Our results are summarized in Conclusions, where we also present a list of open questions for further considerations.

\section{Conical 2-designs}

Conical designs have been introduced as a generalization of complex projective designs from rank-1 projectors to semi-positive operators of arbitrary rank. Throughout this paper, we assume that the operators act on the Hilbert space $\mathcal{H}\simeq\mathbb{C}^d$.

\begin{Definition}(\cite{Graydon})
A conical 2-design $\mathcal{R}=\{R_k;\,k=1,\ldots,M\}$ is a collection of $M\geq d^2$ semi-positive operators $R_k$ such that
\begin{equation}\label{conic}
\sum_{k=1}^MR_k\otimes R_k=\kappa_+\mathbb{I}_d\otimes\mathbb{I}_d+\kappa_-\mathbb{F}_d,
\end{equation}
where $\kappa_+\geq\kappa_->0$, and  $\mathbb{F}_d=\sum_{m,n=1}^d|m\>\<n|\otimes|n\>\<m|$ is the flip operator.
\end{Definition}

From definition, $\sum_{k=1}^MR_k\otimes R_k$ commutes with $U\otimes U$ for any unitary $U$, and $R_k$ span the operator space $\mathcal{B}(\mathcal{H})$ on $\mathcal{H}$. Moreover, in ref. \cite{SIC-MUB_general}, it has been shown that, through the Choi-Jamio{\l}kowski isomorphism \cite{Choi,Jamiolkowski}, eq. (\ref{conic}) is equivalent to
\begin{equation}\label{kap}
\Phi=\kappa_+d\Phi_0+\kappa_-\oper.
\end{equation}
In the above formula, $\Phi_0[X]=\mathbb{I}_d\Tr(X)/d$ is the maximally depolarizing channel, $\oper$ denotes the identity map, and $\Phi[X]=\sum_{k=1}^MR_k\Tr(R_kX)$ is a linear map constructed from $R_k$. Also, note that eq. (\ref{conic}) implies
\begin{equation}\label{conicsum}
\Tr_2\left(\sum_{k=1}^MR_k\otimes R_k\right)=\sum_{k=1}^MR_k\Tr(R_k)=(d\kappa_++\kappa_-)\mathbb{I}_d,
\end{equation}
where $\Tr_2(X)$ is the trace of $X$ over the second subsystem. Moreover, there are special classes of conical 2-designs:
\begin{enumerate}[label=(\it\roman*)]
\item {\it homogeneous conical 2-designs}, for which $\Tr(R_k)$ and $\Tr(R_k^2)$ are constant \cite{Graydon};
\item {\it complex projective 2-designs}, whose elements are rank-1 projectors \cite{Neumaier,Hoggar,Scott}.
\end{enumerate}

\begin{Example}
There exist linear operators that satisfy eq. (\ref{conic}) with $\kappa_+\geq\kappa_->0$ but are not semi-positive. Consider the following qubit operators,
\begin{equation}
\begin{split}
R_1&=\frac{1}{4\sqrt{3}}\begin{pmatrix}
1+\sqrt{5} & 1-i \\ 1+i & -1+\sqrt{5}
\end{pmatrix},\qquad
R_2=\frac{1}{12\sqrt{3}}\begin{pmatrix}
1+3\sqrt{5} & -5-i \\ -5+i & -1+3\sqrt{5}
\end{pmatrix},\\
R_3&=\frac{1}{12\sqrt{3}}\begin{pmatrix}
-1-3\sqrt{5} & -1-5i \\ -1+5i & 1-3\sqrt{5}
\end{pmatrix},\qquad
R_4=\frac{1}{12\sqrt{3}}\begin{pmatrix}
5-3\sqrt{5} & -1+i \\ -1-i & -5-3\sqrt{5}
\end{pmatrix}.
\end{split}
\end{equation}
Observe that $R_1$ and $R_2$ have positive eigenvalues, while the eigenvalues of $R_3$ and $R_4$ are negative. The traces of $R_k$ are real and have the same module,
\begin{equation}
\Tr(R_1)=\Tr(R_2)=\frac{\sqrt{5}}{2\sqrt{3}},\qquad
\Tr(R_3)=\Tr(R_4)=-\frac{\sqrt{5}}{2\sqrt{3}}.
\end{equation}
Now, it is straightforward to show that
\begin{equation}\label{ex1}
\sum_{k=1}^4R_k\otimes R_k=\frac 16
\left[\begin{array}{c c|c c}
3 & 0 & 0 & 0 \\
0 & 2 & 1 & 0 \\
\hline
0 & 1 & 2 & 0 \\
0 & 0 & 0 & 3
\end{array}\right],
\end{equation}
and hence $R_k$ satisfy eq. (\ref{conic}) with $\kappa_+=1/3$ and $\kappa_-=1/6$.
However, $R_k$ do not form a conical 2-design due to $R_k\ngeq 0$. This proves that semi-positivity is indeed an essential requirement in the definition of conical designs.
\end{Example}

Conical 2-designs are also formed by certain classes of positive, operator-valued measures (POVMs), where $R_k$ are semi-positive operators such that $\sum_{k=1}^MR_k=\mathbb{I}_d$. Examples include symmetric, informationally complete (SIC) POVMs \cite{Renes}, general SIC POVMs \cite{Gour,Yoshida}, mutually unbiased bases (MUBs) \cite{Schwinger,Szarek}, mutually unbiased measurements (MUMs) \cite{Kalev,Wang}, $(N,M)$-POVMs \cite{SIC-MUB}, generalized symmetric measurements \cite{SIC-MUB_general}, and generalized equiangular measurements \cite{GEAM}. A counterexample consists of semi-SIC POVMs \cite{semi-SIC} and equioverlapping measurements \cite{EOM22,EOM24,EOMq3}, which allow for operators of unequal traces.

In what follows, we prove that there exists a close relation between conical 2-designs and generalized equiangular tight frames.

\section{Generalized equiangular tight frames}

In quantum information theory, pure quantum states are represented by rank-1 projectors $|\psi_k\>\<\psi_k|$ onto the vectors $\psi_k$ in the Hilbert space $\mathcal{H}\simeq\mathbb{C}^d$. A more general construction is a line $\{P_k=a_k|\psi_k\>\<\psi_k|;\,a_k\in\mathbb{R}/\{0\}$, whose elements are rank-1 operators $P_k$ that are projective only for $a_k=1$. Of particular interest are collections $\{P_k;\,k=1,\ldots,M\}$ of lines that are uniform and equiangular; that is, $a_k=a$ and $\Tr(P_kP_\ell)=c\Tr(P_k)\Tr(P_\ell)$ for all $k\neq\ell$. Uniform equiangular lines for which $\sum_{k=1}^MP_k=\gamma\mathbb{I}_d$, $\gamma>0$, define an equiangular tight frame \cite{Strohmer2,Strohmer}. It has been shown that the number $M$ of frame elements $P_k$ belongs to the range $d\leq M\leq d^2$, where $M=d^2$ corresponds to a SIC POVM rescaled by the factor $\gamma$ \cite{Lemmens}. Recently, a generalization of equiangular tight frames has been considered, where the rank requirement of $P_k$ is dropped.

\begin{Definition}(\cite{GEAM})\label{GETF}
A generalized equiangular tight frame (GETF) $\mathcal{P}=\{P_k;\,k=1,\ldots,M\}$ is a collection of $2\leq M\leq d^2$ semi-positive, non-zero operators $P_k$ such that
\begin{equation}
\begin{split}
\Tr(P_k)&=a,\\
\Tr(P_k^2)&=b[\Tr(P_k)]^2,\\
\Tr(P_kP_\ell)&=c\Tr(P_k)\Tr(P_\ell),\qquad k\neq\ell,
\end{split}
\end{equation}
and $\sum_{k=1}^MP_k=\gamma\mathbb{I}_d$ with $\gamma>0$.
\end{Definition}

Consequently, this definition fixes the values of both the trace and the overlap,
\begin{equation}\label{ac}
a=\frac{d\gamma}{M},\qquad c=\frac{M-db}{d(M-1)},
\end{equation}
as well as the admissible range of the free parameter $b$,
\begin{equation}\label{brange}
\frac{1}{d}<b\leq\frac 1d \min\{d,M\}.
\end{equation}
We exclude $b=1/d$ that corresponds to the trivial case with $P_k=\gamma\mathbb{I}_d/M$. The choice $b=1$ corresponds to rank-1 projectors. If $b=M/d$, then $P_k$ are projectors or rank $d/M$.
Additionally, all the elements of a GETF are always linearly independent. For a detailed derivation of these properties, see Appendix A. Note that for $\gamma=1$, GETFs are quantum measurements known as $(1,M)$-POVMs \cite{SIC-MUB}, and taking $a=1$ reproduces (projective) equiangular measurements \cite{EOM22}.

Following the methods presented in refs. \cite{Gour,SIC-MUB_general}, we show that any GETF can be constructed from a set $\mathcal{G}=\{G_k;\,k=0,\ldots,M-1\}$ of orthonormal Hermitian operators such that $G_0=\mathbb{I}_d/\sqrt{d}$ and $\Tr(G_k)=0$ for all $k=1,\ldots,M-1$. In particular, for maximal GETFs ($M=d^2$), $\mathcal{G}$ is an operator basis. Now, using $G_k$, let us define $M$ traceless operators
\begin{equation}\label{H}
H_k=\left\{\begin{aligned}
&G-\sqrt{M}(1+\sqrt{M})G_k,\quad k=1,\ldots,M-1,\\
&(1+\sqrt{M})G,\qquad k=M,
\end{aligned}\right.
\end{equation}
where $G=\sum_{k=1}^{M-1}G_k$. Then, the corresponding GETF is given by
\begin{equation}
P_k=\frac{\gamma}{M}\mathbb{I}_d+\tau H_k,
\end{equation}
provided that the real parameter $\tau$ is chosen so that it guarantees the semi-positivity of every $P_k$. Observe that $\tau$ is related to the GETF parameters $(a,b,c)$ via
\begin{equation}
\tau=\pm\sqrt{\frac{a^2(b-c)}{M(\sqrt{M}+1)^2}}.
\end{equation}
Another way to construct a GETF from the same Hermitian orthonormal basis is to replace $H_k$ with
\begin{equation}\label{Hprime}
H_k^\prime=\left\{\begin{aligned}
&G+\sqrt{M}(1-\sqrt{M})G_k,\quad k=1,\ldots,M-1,\\
&(1-\sqrt{M})G,\qquad k=M.
\end{aligned}\right.
\end{equation}
Then, one arrives at an alternative formula
\begin{equation}
P_k^\prime=\frac{\gamma}{M}\mathbb{I}_d+\tau^\prime H_k^\prime
\end{equation}
for a GETF $\mathcal{P}^\prime=\{P_k;\,k=1,\ldots,M\}$ with
\begin{equation}
\tau^\prime=\pm\sqrt{\frac{a^2(b-c)}{M(\sqrt{M}-1)^2}}.
\end{equation}
Note that a single triple $(a,b,c)$ is associated with four numbers: $\pm|\tau|$, $\pm|\tau^\prime|$. Moreover, in general, $\mathcal{P}^\prime\neq \mathcal{P}$, and hence the same $\mathcal{G}$ can be used to introduce more than one GETF. For more details, see the results in ref. \cite{SIC-MUB_general} for $N=1$.

\section{Conical 2-designs vs. generalized equiangular tight frames}

In this section, we establish an equivalence relation between conical 2-designs and GETFs. Note that this is possible only for sets of $M=d^2$ operators. We divide this task into proving two implication relations.

\begin{Proposition}\label{GETFtoconical}
If $\mathcal{P}=\{P_k;\,k=1,\ldots,d^2\}$ is a generalized equiangular tight frame, then it is a conical 2-design.
\end{Proposition}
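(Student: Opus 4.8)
The plan is to exploit the explicit parametrization developed in Section 3. Since $\mathcal{P}$ is a maximal GETF ($M=d^2$), it can be written as $P_k=\frac{\gamma}{M}\mathbb{I}_d+\tau H_k$, where the traceless $H_k$ are built from a complete orthonormal Hermitian basis $\{G_k\}_{k=0}^{d^2-1}$ with $G_0=\mathbb{I}_d/\sqrt{d}$. Substituting this into $\sum_k P_k\otimes P_k$ and expanding, the first key observation is that $\sum_{k=1}^M H_k=0$, which I would verify directly from (\ref{H}); this annihilates the two linear cross terms $\mathbb{I}_d\otimes H_k$ and $H_k\otimes\mathbb{I}_d$ and reduces the problem to
\begin{equation}
\sum_{k=1}^{d^2}P_k\otimes P_k=\frac{\gamma^2}{M}\,\mathbb{I}_d\otimes\mathbb{I}_d+\tau^2\sum_{k=1}^{d^2}H_k\otimes H_k .
\end{equation}

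First I would evaluate $\sum_k H_k\otimes H_k$. The essential input, and the reason the argument only closes for $M=d^2$, is the resolution of the flip operator in a complete Hermitian operator basis, $\sum_{k=0}^{d^2-1}G_k\otimes G_k=\mathbb{F}_d$, which after removing the $k=0$ term gives $\sum_{k=1}^{d^2-1}G_k\otimes G_k=\mathbb{F}_d-\frac1d\mathbb{I}_d\otimes\mathbb{I}_d$. Expanding each $H_k\otimes H_k$ in terms of $G\otimes G$, $G\otimes G_k$, $G_k\otimes G$ and $G_k\otimes G_k$ with $G=\sum_{k=1}^{M-1}G_k$, and summing, I expect all contributions proportional to $G\otimes G$ to cancel: the coefficient $(M-1)-2\sqrt{M}(1+\sqrt{M})+(1+\sqrt{M})^2$ collapses to zero. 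What survives is $\sum_k H_k\otimes H_k=M(1+\sqrt{M})^2\big(\mathbb{F}_d-\frac1d\mathbb{I}_d\otimes\mathbb{I}_d\big)$. Using $\tau^2=a^2(b-c)/[M(\sqrt{M}+1)^2]$, this yields (\ref{conic}) with
\begin{equation}
\kappa_-=a^2(b-c),\qquad \kappa_+=\frac{\gamma^2}{M}-\frac{a^2(b-c)}{d}.
\end{equation}

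The remaining, and genuinely nontrivial, step is to check $\kappa_+\geq\kappa_->0$, which is where the GETF structure is indispensable and must be fed in through (\ref{ac})--(\ref{brange}). I would first simplify $b-c$ using $c=(d-b)/(d^2-1)$ to obtain $b-c=d(bd-1)/(d^2-1)$; since the admissible range for $M=d^2$ reads $1/d<b\leq 1$, the factor $bd-1$ is strictly positive, so $\kappa_->0$. For the ordering I would reduce $\kappa_+\geq\kappa_-$ to the inequality $b-c\leq d/(d+1)$ after inserting $a^2=\gamma^2/M$, and then show this is equivalent to $b\leq 1$, the upper endpoint of the range. Thus both inequalities are exactly the two ends of the admissible $b$-interval, confirming that $\mathcal{P}$ is a conical $2$-design.

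I anticipate the main obstacle is not the algebraic identity but keeping the bookkeeping of the $b$-range tight enough that the two bounds $\kappa_->0$ and $\kappa_+\geq\kappa_-$ land precisely on $b>1/d$ and $b\leq 1$. As an independent check one can instead verify the claim by matching Hilbert--Schmidt inner products $\Tr[(\sum_kP_k\otimes P_k)(P_i\otimes P_j)]$ against $\kappa_+\Tr(P_i)\Tr(P_j)+\kappa_-\Tr(P_iP_j)$ for all $i,j$, which is legitimate because the $d^2$ linearly independent $P_i$ make $\{P_i\otimes P_j\}$ a basis of $\mathcal{B}(\mathcal{H})\otimes\mathcal{B}(\mathcal{H})$.
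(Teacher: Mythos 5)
Your proof is correct, but it takes a genuinely different route from the paper's. The paper never touches the explicit $H_k$ parametrization: it forms the superoperator $\Phi[X]=\sum_{k}P_k\Tr(P_kX)$, evaluates $\Tr(\Phi[X]P_\ell)$ directly from the trace relations of Definition \ref{GETF}, and uses the fact that the $d^2$ operators $P_\ell$ span $\mathcal{B}(\mathcal{H})$ to conclude $\Phi=a^2(b-c)\oper+acd\gamma\Phi_0$, reading off $\kappa_+=ac\gamma=\gamma^2(d-b)/[d(d^2-1)]$ and $\kappa_-=a^2(b-c)=\gamma^2(db-1)/[d(d^2-1)]$ via the Choi--Jamio{\l}kowski equivalence (\ref{kap}); these are exactly your values, since your $\kappa_+=\gamma^2/M-a^2(b-c)/d$ reduces to $ac\gamma$ upon inserting eq. (\ref{ac}), and your closing ``independent check'' of matching $\Tr[(\sum_kP_k\otimes P_k)(P_i\otimes P_j)]$ is essentially the paper's method restated. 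Your main route instead works at the level of $\sum_kP_k\otimes P_k$, using $\sum_kH_k=0$, the vanishing of the $G\otimes G$ coefficient $(M-1)-2\sqrt{M}(1+\sqrt{M})+(1+\sqrt{M})^2=0$, and the flip-operator resolution $\sum_{k=0}^{d^2-1}G_k\otimes G_k=\mathbb{F}_d$ (which is precisely where maximality $M=d^2$ enters); all of these steps check out, and your bookkeeping that $\kappa_->0$ and $\kappa_+\geq\kappa_-$ land exactly on the endpoints $b>1/d$ and $b\leq 1$ of (\ref{brange}) is correct. The trade-off: the paper's argument is self-contained, needing only Definition \ref{GETF} and the spanning property (Appendix A), whereas yours leans on the Section 3 claim that every maximal GETF arises from the $H_k$ construction --- and since Section 3 provides two inequivalent parametrizations from the same $\mathcal{G}$, strictly you should also run the primed case $P_k^\prime=\frac{\gamma}{M}\mathbb{I}_d+\tau^\prime H_k^\prime$; fortunately it closes identically, because $\sum_kH_k^\prime=0$, the coefficient $(M-1)+2\sqrt{M}(1-\sqrt{M})+(1-\sqrt{M})^2$ also vanishes, and $(\tau^\prime)^2M(\sqrt{M}-1)^2=a^2(b-c)$ reproduces the same $\kappa_\pm$. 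What your route buys in exchange is constructive transparency: the coefficients of $\mathbb{I}_d\otimes\mathbb{I}_d$ and $\mathbb{F}_d$ emerge explicitly rather than through an identification of superoperators.
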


\begin{proof}
Let us start by constructing the map $\Phi[X]=\sum_{k=1}^{d^2}P_k\Tr(P_kX)$, multiplying it by $P_\ell$, and taking the trace. This results in
\begin{equation}\label{trace}
\begin{split}
\Tr(\Phi[X]P_\ell)&=\sum_{k=1}^{d^2}\Tr(P_kP_\ell)\Tr(P_kX)=\Tr(P_\ell^2)\Tr(P_\ell X)+\sum_{k\neq\ell}\Tr(P_kP_\ell)\Tr(P_kX)\\&=a^2b\Tr(P_\ell X)+a^2c\left[\sum_{k=1}^{d^2}\Tr(P_kX)-\Tr(P_\ell X)\right]
=a^2(b-c)\Tr(P_\ell X)+a^2c\gamma\Tr(X),
\end{split}
\end{equation}
where we have used the properties of $P_k$ from Definition \ref{GETF}. 
Equivalently, this formula can be rewritten into
\begin{equation}
\Tr\left\{\left[-\Phi+a^2(b-c)\oper+acd\gamma\Phi_0\right][X]P_\ell\right\}=0,
\end{equation}
and it has to hold for every $X$ and $P_\ell$. As $\mathcal{P}$ spans the operator space on $\mathcal{H}$, this condition reduces to
\begin{equation}
\Phi=a^2(b-c)\oper+acd\gamma\Phi_0.
\end{equation}
Comparing the above equation with eq. (\ref{kap}), we find that
\begin{equation}
\kappa_+=ac\gamma,\qquad \kappa_-=a^2(b-c).
\end{equation}
Now, eqs. (\ref{ac}) and (\ref{brange}) allow us to check that indeed
\begin{equation}
\kappa_+=\gamma^2\frac{d-b}{d(d^2-1)}\geq\kappa_-=\gamma^2\frac{db-1}{d(d^2-1)}>0
\end{equation}
for every $b\leq 1$.
\end{proof}

Therefore, all generalized equiangular tight frame with $M=d^2$ elements are (homogeneous) conical 2-designs. Next, we prove an inverse implication relation for conical 2-designs that are Hermitian operator bases.

\begin{Proposition}\label{conicalgeneral}
If $\mathcal{R}=\{R_k;\,k=1,\ldots,d^2\}$ with linearly independent operators $R_k$ is a conical 2-design, then its elements obey the trace relations
\begin{equation}
\begin{split}
\Tr(R_k)&=w_k,\\
\Tr(R_k^2)&=\kappa_-+\frac{\kappa_+}{\kappa}w_k^2,\\
\Tr(R_kR_\ell)&=\frac{\kappa_+}{\kappa}w_kw_\ell,\qquad k\neq\ell,
\end{split}
\end{equation}
where $\sum_{k=1}^{d^2}w_k^2=d\kappa$ with $\kappa=d\kappa_++\kappa_-$.
\end{Proposition}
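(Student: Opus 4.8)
The plan is to work with the Choi-Jamio{\l}kowski reformulation (\ref{kap}) rather than directly with (\ref{conic}), since it converts the tensor-product identity into a statement about the single map $\Phi[X]=\sum_{k}R_k\Tr(R_kX)$. Using $\Phi_0[X]=\mathbb{I}_d\Tr(X)/d$, equation (\ref{kap}) reads $\Phi[X]=\kappa_+\mathbb{I}_d\Tr(X)+\kappa_-X$ for every operator $X$. I would set $w_k:=\Tr(R_k)$, which establishes the first trace relation by definition, and record the consequence (\ref{conicsum}) of the design condition, namely $\sum_k w_k R_k=\kappa\mathbb{I}_d$ with $\kappa=d\kappa_++\kappa_-$. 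Since the $R_k$ are linearly independent and number $d^2$, they form a basis of $\mathcal{B}(\mathcal{H})$, so this identity also expresses the identity operator in that basis: $\mathbb{I}_d=\kappa^{-1}\sum_k w_k R_k$.

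The key computational step is to evaluate $\Phi$ on a single frame element. Choosing $X=R_\ell$ gives, on the one hand, $\Phi[R_\ell]=\sum_k R_k\Tr(R_kR_\ell)$, and on the other hand, from the reformulated (\ref{kap}), $\Phi[R_\ell]=\kappa_+ w_\ell\mathbb{I}_d+\kappa_- R_\ell$. Substituting the basis expansion of $\mathbb{I}_d$ into the second expression yields $\Phi[R_\ell]=\sum_k R_k\bigl(\tfrac{\kappa_+}{\kappa}w_k w_\ell+\kappa_-\delta_{k\ell}\bigr)$. Equating the two expansions and invoking linear independence of the $R_k$ to match coefficients gives the Gram-matrix entry
\begin{equation}
\Tr(R_kR_\ell)=\frac{\kappa_+}{\kappa}w_kw_\ell+\kappa_-\delta_{k\ell},
\end{equation}
which is precisely the second and third trace relations once the diagonal ($k=\ell$) and off-diagonal ($k\neq\ell$) cases are separated.

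Finally, the normalization $\sum_k w_k^2=d\kappa$ follows by taking the trace of the identity $\sum_k w_k R_k=\kappa\mathbb{I}_d$, since $\Tr(\mathbb{I}_d)=d$ and $\Tr(R_k)=w_k$, so that $\sum_k w_k^2=\kappa d$.

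I expect the only genuine subtlety to be the use of linear independence to read off the coefficients in the basis expansion; this is exactly where the hypothesis that $\mathcal{R}$ consists of $d^2$ linearly independent operators is indispensable, and it is what distinguishes this setting from the informationally overcomplete case treated later. Everything else reduces to a direct substitution once the Choi-Jamio{\l}kowski form and the elementary consequence (\ref{conicsum}) are in hand.
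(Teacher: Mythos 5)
Your proposal is correct and follows essentially the same route as the paper: both rest on the Choi--Jamio{\l}kowski form $\Phi=\kappa_+d\Phi_0+\kappa_-\oper$, the identity $\sum_k w_kR_k=\kappa\mathbb{I}_d$ from eq.~(\ref{conicsum}), and linear independence of the $d^2$ operators to match coefficients after expanding $\mathbb{I}_d$ in the basis $\{R_k\}$; indeed your relation $\Phi[R_\ell]=\kappa_+w_\ell\mathbb{I}_d+\kappa_-R_\ell$ is exactly the paper's eq.~(\ref{bb}) obtained there by pairing $\Tr(\Phi[X]R_k)$ against arbitrary Hermitian $X$. The only cosmetic differences are that you evaluate $\Phi$ directly on a frame element rather than via the trace pairing, and you omit the paper's closing Cauchy--Schwarz check, which is merely a consistency remark and not needed for the stated trace relations.
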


\begin{proof}
Assume that a conical 2-design satisfies the most general trace relations:
\begin{equation}
\begin{split}
\Tr(R_k)&=w_k,\\
\Tr(R_k^2)&=x_k,\\
\Tr(R_kR_\ell)&=y_{k\ell},\qquad k\neq\ell,
\end{split}
\end{equation}
where $w_k$, $x_k$, and $y_{k\ell}=y_{\ell k}$ are real numbers due to the Hermiticity of $R_k$. From eq. (\ref{conicsum}), one also has
\begin{equation}\label{sum}
\sum_{k=1}^{d^2}w_kR_k=\kappa\mathbb{I}_d.
\end{equation}
Using the trace relations, we find
\begin{equation}\label{cc}
\sum_{k=1}^{d^2}w_k^2=\sum_{k=1}^{d^2}w_k\Tr(R_k)=\kappa\Tr(\mathbb{I}_d)=d\kappa,
\qquad
\kappa w_k=\kappa\Tr(R_k)=\sum_{\ell=1}^{d^2}w_\ell\Tr(R_kR_\ell)
=w_kx_k+\sum_{\ell\neq k}w_\ell y_{k\ell}.
\end{equation}

Recall that $\mathcal{R}$ is a conical 2-design if and only if
\begin{equation}\label{kap2}
\Phi=\kappa_+d\Phi_0+\kappa_-\oper,\qquad \kappa_+\geq\kappa_->0,
\end{equation}
where $\Phi[X]=\sum_{\ell=1}^{d^2}R_\ell\Tr(R_\ell X)$. For the left hand-side of eq. (\ref{kap2}), we calculate
\begin{equation}\label{LHS}
\begin{split}
\Tr(\Phi[X]R_k)&=\sum_{\ell=1}^{d^2}\Tr(R_\ell R_k)\Tr(R_\ell X) 
=\Tr(R_k^2)\Tr(R_k X)+\sum_{\ell\neq k}\Tr(R_\ell R_k)\Tr(R_\ell X)\\&
=x_k\Tr(R_k X)+\sum_{\ell\neq k}y_{k\ell}\Tr(R_\ell X),
\end{split}
\end{equation}
and, for the right hand-side,
\begin{equation}\label{RHS}
\Tr(\Phi[X]R_k)=\kappa_+w_k\Tr(X)+\kappa_-\Tr(R_kX).
\end{equation}
By comparing eqs. (\ref{LHS}) and (\ref{RHS}), we obtain the formula
\begin{equation}
\Tr\left\{\left[(x_k-\kappa_-)R_k+\sum_{\ell\neq k}y_{k\ell}R_\ell
-\kappa_+w_k\mathbb{I}_d\right]X\right\}=0
\end{equation}
that holds for all Hermitian $X$. Hence,
\begin{equation}\label{bb}
(x_k-\kappa_-)R_k+\sum_{\ell\neq k}y_{k\ell}R_\ell
=w_k\kappa_+\mathbb{I}_d,
\end{equation}
or, in terms of $R_k$ only,
\begin{equation}
[\kappa(x_k-\kappa_-)-\kappa_+w_k^2]R_k+\sum_{\ell\neq k}
(\kappa y_{k\ell}-w_kw_\ell\kappa_+)R_\ell=0.
\end{equation}
But all $R_k$ are linearly independent, so every term multiplying $R_k$ vanishes. Therefore,
\begin{equation}\label{xy}
x_k=\kappa_-+\frac{\kappa_+}{\kappa}w_k^2,\qquad 
y_{k\ell}=\frac{\kappa_+}{\kappa}w_kw_\ell.
\end{equation}
Finally, we find that the Cauchy-Schwarz inequality does not introduce any new conditions due to the inequalities
\begin{equation}
\begin{split}
w_k^2&=[\Tr(R_k)]^2\leq\Tr(R_k^2)\Tr(\mathbb{I}_d)=dx_k,\\
y_{k\ell}^2&=[\Tr(R_kR_\ell)]^2\leq\Tr(R_k^2)\Tr(R_\ell^2)=x_kx_\ell
\end{split}
\end{equation}
being always satisfied for $x_k$ and $y_{k\ell}$ given by eq. (\ref{xy}).
\end{proof}

\begin{Example}
As an illustrative example of a conical 2-design with elements of unequal traces, consider the following semi-positive qubit operators,
\begin{equation}
\begin{split}
R_1&=\frac{\sqrt{5}}{3\sqrt{2}}\begin{pmatrix}
1 & 0 \\ 0 & 0
\end{pmatrix},\qquad
R_2=\frac{1}{6\sqrt{5}}\begin{pmatrix}
4 & 3 \\ 3 & 6
\end{pmatrix},\\
R_3&=\frac{1}{3\sqrt{10}}\begin{pmatrix}
2 & -1-i\sqrt{5} \\ -1+i\sqrt{5} & 3
\end{pmatrix},\qquad
R_4=\frac{1}{6\sqrt{5}}\begin{pmatrix}
4 & -2+i\sqrt{5} \\ -2-i\sqrt{5} & 6
\end{pmatrix}.
\end{split}
\end{equation}
Observe that $\Tr(R_1)=\Tr(R_3)=\sqrt{5}/(3\sqrt{2})$ and $\Tr(R_2)=\Tr(R_4)=\sqrt{5}/3$. Moreover, $R_1$ and $R_3$ are rank-1 operators, whereas $R_2$ and $R_4$ are full rank operators with the same spectrum:
\begin{equation}
\left\{\frac{\sqrt{10}+5}{6\sqrt{5}},\,\frac{\sqrt{10}-5}{6\sqrt{5}}\right\}.
\end{equation}
Now, we calculate $\sum_{k=1}^4R_k\otimes R_k$, which gives results identical to eq. (\ref{ex1}). This shows that $R_k$ form a conical 2-design with $\kappa_+=1/3$ and $\kappa_-=1/6$.
\end{Example}

In general, conical 2-designs define a wider, less symmetric class than generalized equiangular tight frames. Regardless, there exists a close relation between homogeneous designs and GETFs. To show this, we first present the following properties.

\begin{Proposition}\label{equiv}
For a conical 2-design $\mathcal{R}=\{R_k;\,k=1,\ldots,d^2\}$ with linearly independent elements $R_k$, the following statements are equivalent:
\begin{enumerate}[label=(\it\roman*)]
\item $R_k$ are of equal trace: $\Tr(R_k)=w$;
\item $R_k^2$ are of equal trace: $\Tr(R_k^2)=x$;
\item $R_k$ satisfy the symmetry condition $\Tr(R_kR_\ell)=y$ for all $k\neq\ell$;
\item $R_k$ sum up to a rescaled identity operator $\sum_{k=1}^{d^2}R_k=\eta\mathbb{I}_d$ for some $\eta>0$;
\item $\mathcal{R}$ is a homogeneous conical 2-design.
\end{enumerate}
\end{Proposition}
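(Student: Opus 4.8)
The plan is to use Proposition \ref{conicalgeneral} as the single engine of the whole argument. That proposition already expresses the two ``quadratic'' data of the design, $\Tr(R_k^2)=\kappa_-+(\kappa_+/\kappa)w_k^2$ and $\Tr(R_kR_\ell)=(\kappa_+/\kappa)w_kw_\ell$, as explicit functions of the traces $w_k=\Tr(R_k)$ alone, with the strictly positive constants $\kappa_+>0$ and $\kappa=d\kappa_++\kappa_->0$. Consequently each of (i)--(iii) is simply a statement about when a fixed function of the $w_k$ is constant, and the proposition reduces to elementary bookkeeping once one extra fact is secured: that every $w_k$ is \emph{strictly} positive. I would establish this first. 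Since the $R_k$ are linearly independent, none of them is the zero operator, and a nonzero semi-positive operator has strictly positive trace; hence $w_k>0$ for all $k$.

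With $w_k>0$ in hand, the equivalences among (i), (ii), (iii) become immediate. The implications (i)$\Rightarrow$(ii) and (i)$\Rightarrow$(iii) follow by substituting $w_k=w$ into the formulas of Proposition \ref{conicalgeneral}. For (ii)$\Rightarrow$(i), constancy of $\Tr(R_k^2)$ forces $w_k^2$ to be constant, and since $w_k\ge 0$ this yields $w_k$ constant. For (iii)$\Rightarrow$(i), I would fix any three indices $i,j,k$ (available because $d\ge 2$ gives $d^2\ge 3$) and use $w_iw_j=w_iw_k$ together with $w_i>0$ to conclude $w_j=w_k$, so that all $w_k$ coincide.

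For (iv) the idea is to combine two linear relations. Equation (\ref{sum}), valid for any conical 2-design, reads $\sum_k w_kR_k=\kappa\mathbb{I}_d$, while assumption (iv) gives $\sum_k R_k=\eta\mathbb{I}_d$. Eliminating the identity produces $\sum_k (w_k-\kappa/\eta)R_k=0$, and linear independence of the $R_k$ then forces $w_k=\kappa/\eta$ for every $k$, which is precisely (i); the converse (i)$\Rightarrow$(iv) is read straight off (\ref{sum}) with $\eta=\kappa/w>0$. Finally, (v) is by definition the conjunction of (i) and (ii), so its equivalence with (i) is nothing more than the already-established equivalence (i)$\Leftrightarrow$(ii). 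Arranging all of these as a star centred on (i) closes the five-fold equivalence.

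The computations themselves are routine; the genuinely load-bearing inputs are the strict positivity $w_k>0$ and the linear independence of $\mathcal{R}$. I expect the positivity step to be the point that most needs care, because it is exactly where semi-positivity of the $R_k$ enters, as opposed to the bare algebraic identity (\ref{conic}): Example 1 exhibits operators satisfying (\ref{conic}) whose traces may vanish or be negative, and for such operators both the ``$w_k^2$ constant $\Rightarrow w_k$ constant'' inference and the division in the (iii)$\Rightarrow$(i) step would fail. Flagging $d\ge 2$ (hence $d^2\ge 3$) for the three-index argument is the only other hypothesis worth stating explicitly.
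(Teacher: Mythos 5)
Your proof is correct and follows essentially the same route as the paper's: it runs everything through the trace formulas of Proposition \ref{conicalgeneral}, derives $(i)\Leftrightarrow(ii)\Leftrightarrow(iii)$ from the fact that $x_k$ and $y_{k\ell}$ are explicit functions of the $w_k$, handles $(iv)$ by eliminating $\mathbb{I}_d$ between $\sum_k w_kR_k=\kappa\mathbb{I}_d$ and $\sum_k R_k=\eta\mathbb{I}_d$ and invoking linear independence, and treats $(v)$ as the conjunction of $(i)$ and $(ii)$. The one difference is that you make explicit what the paper dismisses with ``Obviously'': the inferences $(ii)\Rightarrow(i)$ and $(iii)\Rightarrow(i)$ genuinely require $w_k>0$ (which you correctly derive from semi-positivity plus $R_k\neq 0$), and your observation that Example 1's sign-mixed traces would break these steps is a worthwhile sharpening, not a deviation.
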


\begin{proof}
Let us use the notation from Proposition \ref{conicalgeneral}. Obviously,
\begin{equation}
(i)\quad\bigforall_k\quad w_k=w\quad\Longleftrightarrow\quad
(iii)\quad \bigforall_{k\neq\ell}\quad y_{k\ell}=\frac{\kappa_+}{\kappa}w_kw_\ell=y,
\end{equation}
and also
\begin{equation}
(i)\quad\bigforall_k\quad w_k=w\quad\Longleftrightarrow\quad
(ii)\quad \bigforall_{k}\quad x_k=\kappa_-+\frac{\kappa_+}{\kappa}w_k^2=x,
\end{equation}
which in turn implies that $\mathcal{R}$ is a homogeneous conical 2-design. The inverse implication relation follows directly from the definition of homogeneous designs, and hence $(i)$ $\Longleftrightarrow$ $(v)$. Next, note that if $w_k=w$, then one immediately has $(iv)$ with $\eta=\kappa/w$ due to $\sum_{k=1}^{d^2}w_kR_k=w\sum_{k=1}^{d^2}R_k=\kappa\mathbb{I}_d$. Conversely, if $(iv)$ holds, then
\begin{equation}
\kappa\sum_{k=1}^{d^2}R_k=\kappa\eta\mathbb{I}_d=\eta\sum_{k=1}^{d^2}w_kR_k,
\end{equation}
where we used eq. (\ref{sum}) in the second equality. Hence, we have
\begin{equation}
\sum_{k=1}^{d^2}(\kappa-\eta w_k)R_k=0,
\end{equation}
which recovers $w_k=\kappa/\eta=w$ from linear independence of $R_k$, and therefore $(i)$ $\Longleftrightarrow$ $(iv)$.
\end{proof}

Finally, a one-to-one correspondence can be established between homogeneous conical 2-designs and GETFs.

\begin{Proposition}\label{conicaltoGETF}
A homogeneous conical 2-design $\mathcal{R}=\{R_k;\,k=1,\ldots,d^2\}$ with linearly independent elements $R_k$ is a GETF.
\end{Proposition}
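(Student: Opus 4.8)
The plan is to verify the four defining properties of Definition~\ref{GETF} directly, reading off the GETF parameters $(a,b,c)$ and $\gamma$ from data already supplied by Propositions~\ref{conicalgeneral} and~\ref{equiv}. Since $\mathcal{R}$ is homogeneous, Proposition~\ref{equiv} lets me invoke \emph{all} of its equivalent conditions at once: the traces $\Tr(R_k)=w$, the squared traces $\Tr(R_k^2)=x$, and the overlaps $\Tr(R_kR_\ell)=y$ (for $k\neq\ell$) are simultaneously constant, and moreover $\sum_{k=1}^{d^2}R_k=\eta\mathbb{I}_d$ for some $\eta>0$. These are precisely the structural clauses that Definition~\ref{GETF} demands, so the work reduces to identifying the constants and checking that the resulting $b$ lies in the admissible range.

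Next I would specialize the trace relations of Proposition~\ref{conicalgeneral} to $w_k=w$, which gives $a=w$ together with $x=\kappa_-+(\kappa_+/\kappa)w^2$ and $y=(\kappa_+/\kappa)w^2$. Writing $x=b\,w^2$ and $y=c\,w^2$ then forces
\begin{equation}
b=\frac{\kappa_-}{w^2}+\frac{\kappa_+}{\kappa},\qquad c=\frac{\kappa_+}{\kappa},
\end{equation}
which matches the squared-trace and overlap conditions of a GETF with $\Tr(R_k)=a$. For the tight-frame condition I would set $\gamma=\eta$; condition~$(iv)$ of Proposition~\ref{equiv} supplies $\eta=\kappa/w$, and since the $R_k$ are non-zero and semi-positive their common trace must satisfy $w>0$ (otherwise eq.~(\ref{sum}) would give $\kappa\mathbb{I}_d=w\sum_kR_k=0$, contradicting $\kappa>0$). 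Hence $\gamma>0$ and every $P_k=R_k$ is a non-zero semi-positive operator, as required.

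The step I expect to be the real content is confirming that $b$ falls in the admissible window $1/d<b\leq 1$ of eq.~(\ref{brange}) for $M=d^2$ (where $\min\{d,M\}=d$). Using the normalization $\sum_{k=1}^{d^2}w_k^2=d\kappa$ from Proposition~\ref{conicalgeneral} with all $w_k=w$ gives $w^2=\kappa/d$, whence
\begin{equation}
b=\frac{d\kappa_-+\kappa_+}{\kappa}=\frac{d\kappa_-+\kappa_+}{d\kappa_++\kappa_-}.
\end{equation}
The inequality $b\leq 1$ is then equivalent to $(d-1)\kappa_-\leq(d-1)\kappa_+$, i.e.\ to $\kappa_-\leq\kappa_+$, which holds by the definition of a conical 2-design, while $b>1/d$ is equivalent to $(d^2-1)\kappa_->0$, which holds because $\kappa_->0$; the latter also excludes the degenerate value $b=1/d$. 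With $(a,b,c,\gamma)$ identified and $b$ in range, all clauses of Definition~\ref{GETF} are met, so $\mathcal{R}$ is a GETF. I would not re-derive the consistency relations~(\ref{ac}), since those are automatic consequences of the definition established in Appendix~A. The only mild care required is to treat $d=1$ as the trivial single-operator case separately, so that the divisions by $d-1$ and $d^2-1$ are legitimate in the relevant regime $d\geq 2$.
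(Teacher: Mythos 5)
Your proof is correct and takes essentially the same route as the paper's: specialize the trace relations of Proposition~\ref{conicalgeneral} to the homogeneous case, identify $(a,b,c,\gamma)=\bigl(w,\,x/w^2,\,y/w^2,\,\kappa/w\bigr)$ with $w^2=\kappa/d$, and verify $1/d<b\leq 1$ directly from $\kappa_+\geq\kappa_->0$. Your additional observations --- that $w>0$ follows from eq.~(\ref{sum}), that the consistency relations~(\ref{ac}) hold automatically, and that $d=1$ must be set aside --- are harmless refinements of the same argument.
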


\begin{proof}
Consider a homogeneous conical 2-design, where by definition
\begin{equation}
\begin{split}
\Tr(R_k)&=w=\sqrt{\frac{\kappa}{d}},\\
\Tr(R_k^2)&=x=\kappa_-+\frac{\kappa_+}{d},\\
\Tr(R_kR_\ell)&=y=\frac{\kappa_+}{d},\qquad k\neq\ell,
\end{split}
\end{equation}
and $\sum_{k=1}^{d^2}R_k=\kappa\mathbb{I}_d/w$.
Now, comparing these coefficients with Definition \ref{GETF} of generalized equiangular tight frames, we identify $\gamma=\kappa/w>0$, as well as
\begin{equation}
a=w,\qquad b=\frac{x}{w^2},\qquad c=\frac{y}{w^2}.
\end{equation}
It remains to check whether the above choices of $a$, $b$, and $c$ indeed satisfy all the properties of a GETF. Obviously, $a>0$ and $c>0$. The range of $b$ follows from the conditions on $\kappa_\pm$. Observe that if $\kappa_->0$, then
\begin{equation}
b=\frac{x}{w^2}=\frac 1d \left(1+\frac{d^2-1}{\kappa}\kappa_-\right)>\frac 1d.
\end{equation}
Finally, if $\kappa_+\geq\kappa_-$, then $b$ is upper bounded by
\begin{equation}
b_{\max}=b\Big|_{\kappa_+=\kappa_-}=1.
\end{equation}
Therefore, $b$ corresponding to a homogeneous conical 2-design lies in the admissible range from eq. (\ref{brange}), which guarantees $R_k\geq 0$.
\end{proof}

By combining the results from Propositions \ref{GETFtoconical} and \ref{conicaltoGETF}, we formulate our first main result.

\begin{Theorem}\label{main1}
A collection $\mathcal{R}=\{R_k;\,k=1,\ldots,d^2\}$ of linearly independent operators $R_k$ is a homogeneous conical 2-design if and only if it is a generalized equiangular tight frame.
\end{Theorem}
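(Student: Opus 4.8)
The plan is to observe that this biconditional is just the conjunction of the two implications already established, Propositions \ref{GETFtoconical} and \ref{conicaltoGETF}, so the remaining work is to glue them together while keeping careful track of the homogeneity qualifier and the linear-independence hypothesis.

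For the direction ``homogeneous conical 2-design $\Rightarrow$ GETF'' I would simply invoke Proposition \ref{conicaltoGETF}, which already shows that a homogeneous conical 2-design built from linearly independent elements meets every requirement of Definition \ref{GETF}, with the identifications $a=w$, $b=x/w^2$, $c=y/w^2$, and $\gamma=\kappa/w$, and with $b$ lying inside the admissible interval (\ref{brange}). Nothing further is needed in this direction.

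For the converse, ``GETF $\Rightarrow$ homogeneous conical 2-design'', I would start from Proposition \ref{GETFtoconical}, which establishes that any GETF with $d^2$ elements is a conical 2-design. It then remains only to upgrade ``conical 2-design'' to ``\emph{homogeneous} conical 2-design'': by Definition \ref{GETF} every element satisfies $\Tr(P_k)=a$ and $\Tr(P_k^2)=ba^2$, so both the trace and the squared trace are constant, which is precisely the defining property of homogeneity. The linear independence demanded by the theorem is consistent here, since GETF elements are automatically linearly independent, as remarked after Definition \ref{GETF}.

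I do not expect a genuine obstacle, since both implications are already in hand; the care required is purely bookkeeping, ensuring that the ``homogeneous'' qualifier is neither lost nor spuriously added. The one point worth verifying explicitly is that the two admissible ranges for $b$ coincide: for $M=d^2$ the GETF bound (\ref{brange}) reads $1/d<b\leq 1$, while the conical-design inequalities $\kappa_+\geq\kappa_->0$ translate, via the expressions in the proof of Proposition \ref{GETFtoconical}, into the same interval, with $b=1$ the shared boundary corresponding to $\kappa_+=\kappa_-$ and to rank-one projectors. Confirming that these endpoints match exactly is what certifies that no GETF is excluded from being a conical 2-design and vice versa, closing the equivalence.
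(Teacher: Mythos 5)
Your proposal is correct and matches the paper's own route exactly: the paper likewise obtains Theorem~\ref{main1} by combining Propositions~\ref{GETFtoconical} and~\ref{conicaltoGETF}, with the homogeneity upgrade you supply corresponding to the paper's remark after Proposition~\ref{GETFtoconical} that every GETF with $d^2$ elements is a \emph{homogeneous} conical 2-design (constant $\Tr(P_k)=a$ and $\Tr(P_k^2)=a^2b$). Your endpoint check that for $M=d^2$ the range~(\ref{brange}) becomes $1/d<b\leq 1$ and matches $\kappa_+\geq\kappa_->0$ is also consistent with the computation at the end of the paper's proof of Proposition~\ref{GETFtoconical}.
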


Recall that a GETF is a quantum measurements (POVM) if $\gamma=1$. More specifically, it is a general SIC POVM \cite{Gour}. By definition, the POVM elements $P_k$ sum up to the identity and hence satisfy all the statements in Proposition \ref{equiv}. This allows us to formulate another relation.

\begin{Theorem}\label{conicalPOVM}
Consider a POVM $\mathcal{P}=\{P_k;\,k=1,\ldots,d^2\}$ whose elements $P_k$ are linearly independent. Then, $\mathcal{P}$ is a conical 2-design if and only if it is a general SIC POVM.
\end{Theorem}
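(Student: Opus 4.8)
The plan is to read off this statement as a near-immediate corollary of the structural results already in hand, with the normalization constant $\gamma$ serving as the single bridge between the POVM property and the general SIC POVM property. Since the equivalence is to be proven in both directions, I would split the argument accordingly.

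For the direction ``conical 2-design $\Rightarrow$ general SIC POVM,'' I would begin by observing that the POVM defining relation $\sum_{k=1}^{d^2}P_k=\mathbb{I}_d$ is exactly statement $(iv)$ of Proposition \ref{equiv} with $\eta=1$. Because $\mathcal{P}$ is assumed to be a conical 2-design with linearly independent elements, Proposition \ref{equiv} applies and forces all of $(i)$--$(v)$; in particular $(v)$ says that $\mathcal{P}$ is a \emph{homogeneous} conical 2-design. Theorem \ref{main1} then upgrades this to: $\mathcal{P}$ is a GETF. The only remaining task is to pin down the free parameter $\gamma$ in Definition \ref{GETF}: comparing $\sum_{k=1}^{d^2}P_k=\gamma\mathbb{I}_d$ with the POVM condition gives $\gamma=1$, which is precisely the defining value for a general SIC POVM.

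The reverse direction is even shorter. A general SIC POVM is, by the remark preceding the theorem, a GETF with $M=d^2$ and $\gamma=1$; Proposition \ref{GETFtoconical} immediately makes it a conical 2-design, the value $\gamma=1$ recovers the POVM normalization $\sum_k P_k=\mathbb{I}_d$, and linear independence of the elements is intrinsic to any GETF (as noted after Definition \ref{GETF}), so all hypotheses are met.

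I do not expect a genuine obstacle, since every heavy implication has already been discharged in Propositions \ref{GETFtoconical}--\ref{conicaltoGETF} and Theorem \ref{main1}. The one point deserving care is the bookkeeping of constants: one should confirm that the homogeneous trace value is consistent with $\gamma=1$. Taking the trace of $\sum_k P_k=\mathbb{I}_d$ yields $d^2 w=d$, so $w=1/d$, and combining this with $w=\sqrt{\kappa/d}$ and the relation $\eta=\kappa/w=\gamma$ (from Propositions \ref{equiv} and \ref{conicaltoGETF}) forces $\kappa=1/d$ and reproduces $\gamma=1$ without contradiction. Once this consistency is verified, the two implications close the equivalence.
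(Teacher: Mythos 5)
Your proof is correct and takes essentially the same route as the paper, which presents Theorem \ref{conicalPOVM} as an immediate corollary of the remark that the POVM normalization $\sum_{k}P_k=\mathbb{I}_d$ is exactly statement $(iv)$ of Proposition \ref{equiv} with $\eta=1$, combined with Theorem \ref{main1} (i.e., Propositions \ref{GETFtoconical} and \ref{conicaltoGETF}) and the identification of GETFs with $M=d^2$, $\gamma=1$ as general SIC POVMs. Your consistency check $w=1/d$, $\kappa=1/d$, $\gamma=\kappa/w=1$ is sound and merely makes explicit what the paper leaves implicit.
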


The scope of our considerations is limited to maximal sets of semi-positive operators, which form informationally complete sets. However, the results of this section can easily be generalized to informationally overcomplete. To show this, we introduce families of GETFs that satisfy additional symmetry constraints.

\section{Mutually unbiased generalized equiangular tight frames}

Consider a collection of $N$ generalized equiangular tight frames $\mathcal{P}_\alpha=\{P_{\alpha,k};\,k=1,\ldots,M_\alpha\}$, where $\sum_{k=1}^{M_\alpha}P_{\alpha,k}=\gamma_\alpha\mathbb{I}_d$ with $\gamma_\alpha>0$. Each frame consists of $M_\alpha\geq 2$ semi-positive operators, and there are no constraints on their ranks. Assume that these GETFs are complementary \cite{PetzRuppert} (or mutually unbiased \cite{Fickus,Goyeneche2}), which means that their elements satisfy $\Tr(P_{\alpha,k}P_{\beta,\ell})=f\Tr(P_{\alpha,k})\Tr(P_{\beta,\ell})$ for any $\alpha\neq\beta$. This allows us to introduce the following definition.

\begin{Definition}\label{MUGETFS}
Mutually unbiased generalized equiangular tight frames (MU GETFs) $\mathcal{P}=\cup_{\alpha=1}^N\mathcal{P}_\alpha$ are a collection of $N$ generalized equiangular tight frames $\mathcal{P}_\alpha=\{P_{\alpha,k};\,k=1,\ldots,M_\alpha\}$, $\alpha=1,\ldots,N$, such that
\begin{equation}
\begin{split}
\Tr(P_{\alpha,k})&=a_\alpha,\\
\Tr(P_{\alpha,k}^2)&=b_\alpha \Tr(P_{\alpha,k})^2,\\
\Tr(P_{\alpha,k}P_{\alpha,\ell})&=c_\alpha
\Tr(P_{\alpha,k})\Tr(P_{\alpha,\ell}),\qquad k\neq\ell,\\
\Tr(P_{\alpha,k}P_{\beta,\ell})&=
f\Tr(P_{\alpha,k})\Tr(P_{\beta,\ell}),\qquad \alpha\neq\beta,
\end{split}
\end{equation}
and $\sum_{k=1}^{M_\alpha}P_{\alpha,k}=\gamma_\alpha\mathbb{I}_d$ for $\gamma_\alpha>0$.
\end{Definition}

Using the results in eqs. (\ref{ac}) and (\ref{brange}), we immediately recover the coefficients
\begin{equation}\label{ac2}
a_\alpha=\frac{d\gamma_\alpha}{M_\alpha},\qquad 
c_\alpha=\frac{M_\alpha-db_\alpha}{d(M_\alpha-1)},
\end{equation}
and the ranges of the free parameters $b_\alpha$,
\begin{equation}\label{brange2}
\frac{1}{d}<b_\alpha\leq\frac 1d \min\{d,M_\alpha\}.
\end{equation}
Lastly, the value of $f=1/d$ follows from
\begin{equation}
d\gamma_\alpha\gamma_\beta=\sum_{k=1}^{M_\alpha}\sum_{\ell=1}^{M_\beta}
\Tr(P_{\alpha,k}P_{\beta,\ell})=f\sum_{k=1}^{M_\alpha}\sum_{\ell=1}^{M_\beta}
\Tr(P_{\alpha,k})\Tr(P_{\beta,\ell})=d^2f\gamma_\alpha\gamma_\beta,
\end{equation}
which holds for every $\alpha\neq\beta$.

Note that, in the most general case, $\mathcal{P}$ itself is not a GETF because it does not always satisfy the trace relations in Definition \ref{GETF}, even though $\sum_{\alpha=1}^N\sum_{k=1}^{M_\alpha}P_{\alpha,k}=\Gamma\mathbb{I}_d$ with $\Gamma=\sum_{\alpha=1}^N\gamma_\alpha$. The special case where MU GETFs form a GETF corresponds to the choice $\gamma_\alpha=aM_\alpha/d$ and $b_\alpha=b$, where $a$ is a positive constant and $1/d<b\leq\min_\alpha\{d,M_\alpha\}$.

In Appendix B, we prove that the set
$\{\mathbb{I}_d,\,P_{\alpha,k};\,k=1,\ldots,M_\alpha-1;\,\alpha=1,\ldots,N\}$
consists of linearly independent operators, and the total number $|\mathcal{P}|$ of elements of $\mathcal{P}$ is bounded by
\begin{equation}\label{bounds}
2N\leq\sum_{\alpha=1}^NM_\alpha\leq d^2+N-1.
\end{equation}
If and only if the upper bound is reached, $P_{\alpha,k}\in\mathcal{P}$ form an informationally overcomplete set. Additionally taking $\Gamma=1$ results in generalized equiangular measurements \cite{GEAM}.

Similarly to Section 3, we construct MU GETFs from a set of orthonormal Hermitian operators $\mathcal{G}=\{G_0,\,G_{\alpha,k};\,k=1,\ldots,M_\alpha-1;\,\alpha=1,\ldots,N\}$ that consists of $G_0=\mathbb{I}_d/\sqrt{d}$ and traceless $G_{\alpha,k}$. Note that this time the elements of $\mathcal{G}$ are assigned two indices, and the resulting GETFs strongly depend on the partition of $\mathcal{G}$ into $\mathcal{G}_\alpha=\{G_{\alpha,k};\,k=1,\ldots,M_\alpha-1\}$. This can be seen from the construction method alone, as the mutually unbiased GETFs are given by
\begin{equation}
P_{\alpha,k}=\frac{\gamma_\alpha}{M_\alpha}\mathbb{I}_d+\tau_\alpha H_{\alpha,k},
\end{equation}
where
\begin{equation}\label{Hak}
H_{\alpha,k}=\left\{\begin{aligned}
&G_\alpha-\sqrt{M_\alpha}(1+\sqrt{M_\alpha})G_{\alpha,k},\quad k=1,\ldots,M_\alpha-1,\\
&(1+\sqrt{M_\alpha})G_\alpha,\qquad k=M_\alpha,
\end{aligned}\right.
\end{equation}
$G_\alpha=\sum_{k=1}^{M_\alpha-1}G_{\alpha,k}$, and $\tau_\alpha$ is chosen to guarantee that $P_{\alpha,k}\geq 0$ for all $k=1,\ldots,M_\alpha$ and a fixed $\alpha$. The relation between $\tau_\alpha$ and the GETF parameters $(a_\alpha,b_\alpha,c_\alpha)$ reads
\begin{equation}
\tau_\alpha=\pm\sqrt{\frac{S_\alpha}{M_\alpha(\sqrt{M_\alpha}+1)^2}},\qquad S_\alpha\equiv a_\alpha^2(b_\alpha-c_\alpha).
\end{equation}
Alternatively, any number of GETFs $\mathcal{P}_\alpha^\prime=\{P_{\alpha,k}^\prime;\,k=1,\ldots,M_\alpha\}$ can be constructed through the formula
\begin{equation}
P_{\alpha,k}^\prime=\frac{\gamma_\alpha}{M_\alpha}\mathbb{I}_d+\tau_\alpha^\prime H_{\alpha,k}^\prime,
\end{equation}
where the choice of $\tau_\alpha^\prime$ guarantees that $P_{\alpha,k}^\prime\geq 0$ and
\begin{equation}\label{Hak2}
H_{\alpha,k}^\prime=\left\{\begin{aligned}
&G_\alpha+\sqrt{M_\alpha}(1-\sqrt{M_\alpha})G_{\alpha,k},\quad k=1,\ldots,M_\alpha-1,\\
&(1-\sqrt{M_\alpha})G_\alpha,\qquad k=M_\alpha.
\end{aligned}\right.
\end{equation}
Once again, there are two possible choices of $\tau_\alpha^\prime$ for the same triple $(a_\alpha,b_\alpha,c_\alpha)$,
\begin{equation}
\tau_\alpha^\prime=\pm\sqrt{\frac{S_\alpha}{M_\alpha(\sqrt{M_\alpha}-1)^2}},\qquad S_\alpha\equiv a_\alpha^2(b_\alpha-c_\alpha).
\end{equation}
In general, $\mathcal{P}_\alpha^\prime\neq \mathcal{P}_\alpha$ \cite{SIC-MUB_general}, and hence the same partition of $\mathcal{G}$ into $\mathcal{G}_\alpha$ results in up to four one-parameter families of GETFs. Maximal sets of MU GETFs are informationally overcomplete, with $N-1$ informationally redundant elements, and they are constructed from orthonormal bases $\mathcal{G}$. Finally, observe that $\mathcal{P}=\cup_{\alpha=1}^N\widetilde{\mathcal{P}}_\alpha$, where $\widetilde{\mathcal{P}}_\alpha\in\{\mathcal{P}_\alpha,\mathcal{P}^\prime_\alpha\}$. Therefore, one set of orthonormal Hermitian operators $\mathcal{G}$ can lead to many unequivalent constructions of MU GETFs.

Of particular interest is a class of informationally overcomplete MU GETFs for which there exists a direct relation between the index of coincidence \cite{Rastegin5}
\begin{equation}\label{pak}
\mathcal{C}(\rho)=\sum_{\alpha=1}^N\sum_{k=1}^{M_\alpha}p_{\alpha,k}^2(\rho),\qquad p_{\alpha,k}(\rho)=\frac{1}{\Gamma}\Tr(P_{\alpha,k}\rho),
\end{equation}
and the state purity $\Tr(\rho^2)$. The rescaling by $1/\Gamma$ is necessary for $p_{\alpha,k}$ to form a probability distribution. Observe that the operators $P_{\alpha,k}/\Gamma$ form a GEAM, and therefore the corresponding index of coincidence reads \cite{GEAM}
\begin{equation}\label{IOCN}
\mathcal{C}(\rho)=S\left(\Tr\rho^2-\frac 1d\right)+\mu,
\end{equation}
where
\begin{equation}\label{mu}
\mu=\frac 1d \sum_{\alpha=1}^Na_\alpha \gamma_\alpha,
\end{equation}
but only under the additional condition that $S_\alpha=a_\alpha(b_\alpha-c_\alpha)\equiv S$. As it turns out, this constraint is closely related to yet another property of MU GETFs. Indeed, let us calculate the Frobenius distance between the elements $P_{\alpha,k}$ of a single GETF $\mathcal{P}_\alpha$ \cite{Orlowski},
\begin{equation}
D_2^2(X,Y)=\frac 12\|X-Y\|_2^2,
\end{equation}
where $\|X\|_2=\sqrt{\Tr(X^\dagger X)}$ is the Frobenius norm of $X$. Now, for any $k\neq\ell$, we have
\begin{equation}
D_2^2(P_{\alpha,k},P_{\alpha,\ell})=\frac 12\Tr[(P_{\alpha,k}-P_{\alpha,\ell})^2]
=\frac 12 \Big[\Tr(P_{\alpha,k}^2)+\Tr(P_{\alpha,\ell}^2)-2\Tr(P_{\alpha,k}P_{\alpha,k})\Big]
=a_\alpha^2(b_\alpha-c_\alpha)\equiv S_\alpha.
\end{equation}

\begin{Definition}\label{equidistant}
Mutually unbiased generalized equiangular tight frames are equidistant if $S_\alpha=S$ for all $\alpha=1,\ldots,N$, where
\begin{equation}\label{Srange}
0<S\leq \min_\alpha\left\{\frac{d\gamma_\alpha^2}{M_\alpha},\frac{d-1}{M_\alpha-1}
\frac{d\gamma_\alpha^2}{M_\alpha}\right\},
\end{equation}
\end{Definition}

A detailed derivation of the range of $S$ is presented in Appendix C.

\section{Conical 2-designs vs. mutually unbiased GETFs}

This section generalizes the results of Section 4 to conical 2-designs whose elements are linearly dependent while still spanning the space of operators on $\mathcal{H}\simeq\mathbb{C}^d$. In analogy to mutually unbiased (MU) GETFs, conical 2-designs are now numbered via two sets of indices, so that $\mathcal{R}=\{R_{\alpha,k};\,k=1,\ldots,M_\alpha;\,\alpha=1,\ldots,N\}$. The total number of operators $R_{\alpha,k}$ is given by the upper bound from eq. (\ref{bounds}), which guarantees that $\mathcal{R}$ is an informationally overcomplete (IOC) set. From now on, we assume that all sets of MU GETFs are maximal, which means that there is the total of $|\mathcal{P}|=\sum_{\alpha=1}^NM_\alpha=d^2+N-1$ elements $P_{\alpha,k}$.

Again, we start from proving an implication relation for the GETFs.

\begin{Proposition}\label{MUGETFStoconical}
If $\mathcal{P}_\alpha=\{P_{\alpha,k};\,k=1,\ldots,M_\alpha\}$, $\alpha=1,\ldots,N$, are equidistant mutually unbiased generalized equiangular tight frames, then $\mathcal{P}=\cup_{\alpha=1}^N\mathcal{P}_\alpha$ is a conical 2-design.
\end{Proposition}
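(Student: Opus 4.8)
The plan is to adapt the argument of Proposition~\ref{GETFtoconical} to the two-index setting, the new ingredients being the cross-frame overlap $f=1/d$ and, crucially, the equidistance hypothesis $S_\alpha=S$. First I would assemble the map $\Phi[X]=\sum_{\alpha=1}^{N}\sum_{k=1}^{M_\alpha}P_{\alpha,k}\Tr(P_{\alpha,k}X)$, multiply it on the right by an arbitrary element $P_{\beta,m}$, take the trace, and split the resulting double sum $\sum_{\alpha,k}\Tr(P_{\alpha,k}P_{\beta,m})\Tr(P_{\alpha,k}X)$ into three contributions: the diagonal term ($\alpha=\beta$, $k=m$), the same-frame off-diagonal terms ($\alpha=\beta$, $k\neq m$), and the cross-frame terms ($\alpha\neq\beta$).

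Next I would evaluate each contribution with Definition~\ref{MUGETFS}, using $\sum_{k}P_{\alpha,k}=\gamma_\alpha\mathbb{I}_d$ to convert $\sum_k\Tr(P_{\alpha,k}X)$ into $\gamma_\alpha\Tr(X)$. Exactly as in eq.~(\ref{trace}), the same-frame part collapses to $S_\beta\Tr(P_{\beta,m}X)+c_\beta a_\beta^2\gamma_\beta\Tr(X)$ with $S_\beta=a_\beta^2(b_\beta-c_\beta)$, while the cross-frame part, with $\Tr(P_{\alpha,k}P_{\beta,m})=\tfrac1d a_\alpha a_\beta$, gives $\tfrac{a_\beta}{d}\sum_{\alpha\neq\beta}a_\alpha\gamma_\alpha\Tr(X)$. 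Adding them yields
\begin{equation*}
\Tr(\Phi[X]P_{\beta,m})=S_\beta\Tr(P_{\beta,m}X)+a_\beta\Big[\tfrac1d\sum_\alpha a_\alpha\gamma_\alpha+a_\beta\gamma_\beta\big(c_\beta-\tfrac1d\big)\Big]\Tr(X).
\end{equation*}

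The decisive step is to show that equidistance forces this into the canonical form $\kappa_-\Tr(P_{\beta,m}X)+\kappa_+ a_\beta\Tr(X)$ with $\beta$-independent $\kappa_\pm$. The coefficient of $\Tr(P_{\beta,m}X)$ is $\kappa_-=S_\beta=S$ at once. For the coefficient of $\Tr(X)$ I would substitute $a_\alpha=d\gamma_\alpha/M_\alpha$ and $c_\alpha=(M_\alpha-db_\alpha)/(d(M_\alpha-1))$ to rewrite $S=a_\beta^2 M_\beta(db_\beta-1)/(d(M_\beta-1))$, and then verify the identity $a_\beta\gamma_\beta(c_\beta-\tfrac1d)=-S/d$ for every $\beta$. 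This cancellation is the main obstacle and the precise point where equidistance is indispensable: the $\beta$-dependent remainders vanish only because $S_\beta$ is constant, leaving $\kappa_+=\tfrac1d\sum_\alpha a_\alpha\gamma_\alpha-\tfrac{S}{d}=\mu-\tfrac{S}{d}$, with $\mu$ as in eq.~(\ref{mu}). Since $\mathcal{P}$ spans $\mathcal{B}(\mathcal{H})$, the vanishing of $\Tr\{[\Phi-\kappa_+ d\Phi_0-\kappa_-\oper][X]P_{\beta,m}\}$ for all $X$ and all $\beta,m$ then yields $\Phi=\kappa_+ d\Phi_0+\kappa_-\oper$, i.e.\ eq.~(\ref{kap}).

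It remains to confirm $\kappa_+\geq\kappa_->0$. Positivity $\kappa_-=S>0$ is immediate from Definition~\ref{equidistant}. For $\kappa_+\geq\kappa_-$, which after clearing denominators reads $\sum_\alpha a_\alpha\gamma_\alpha\geq(d+1)S$, I would use $a_\alpha\gamma_\alpha=d\gamma_\alpha^2/M_\alpha$ together with the second upper bound on $S$ in eq.~(\ref{Srange}), namely $d\gamma_\alpha^2/M_\alpha\geq S(M_\alpha-1)/(d-1)$; summing over $\alpha$ and inserting the maximality relation $\sum_\alpha(M_\alpha-1)=d^2-1$ (the upper bound of eq.~(\ref{bounds})) gives $\sum_\alpha a_\alpha\gamma_\alpha\geq S(d^2-1)/(d-1)=(d+1)S$. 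This closes the verification and establishes that $\mathcal{P}$ is a conical 2-design.
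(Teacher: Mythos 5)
Your proposal is correct and follows essentially the same route as the paper's proof: decompose the overlaps of $\Phi[X]$ with each $P_{\beta,m}$ into same-frame and cross-frame contributions, use $f=1/d$ and the identity $c_\beta-f=-(b_\beta-c_\beta)/M_\beta$ to reduce the $\Tr(X)$ coefficient to $a_\beta(\mu-S_\beta/d)$, and invoke equidistance to make $\kappa_-=S$ and $\kappa_+=\mu-S/d$ frame-independent. The only difference is that you explicitly carry out the final check $\kappa_+-\kappa_-\geq 0$ via the bound in eq.~(\ref{Srange}) and the maximality relation $\sum_\alpha(M_\alpha-1)=d^2-1$, a step the paper dismisses as ``easy to check,'' and your computation is a valid (indeed clarifying) way to close it.
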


\begin{proof}
The proof is analogical to that of Proposition \ref{GETFtoconical}.
For every GETF $\mathcal{P}_\alpha$, let us construct a map $\Phi_\alpha[X]=\sum_{k=1}^{M_\alpha}P_{\alpha,k}\Tr(P_{\alpha,k}X)$. From eq. (\ref{trace}), we immediately get
\begin{equation}
\begin{split}
\Tr(\Phi_\alpha[X]P_{\alpha,\ell})=a_\alpha^2(b_\alpha-c_\alpha)\Tr(P_{\alpha,\ell} X)+a_\alpha^2c_\alpha\gamma_\alpha\Tr(X).
\end{split}
\end{equation}
Now, for any $\beta\neq\alpha$, it follows from Definition \ref{MUGETFS} that
\begin{equation}
\begin{split}
\Tr(\Phi_\alpha[X]P_{\beta,\ell})&=\sum_{k=1}^{M_\alpha}\Tr(P_{\alpha,k}P_{\beta,\ell})
\Tr(P_{\alpha,k}X)=fa_\beta\sum_{k=1}^{M_\alpha}a_\alpha \Tr(P_{\alpha,k}X)
=f\gamma_\alpha a_\alpha a_\beta \Tr(X).
\end{split}
\end{equation}
Therefore, for the total map $\Phi=\sum_{\alpha=1}^N\Phi_\alpha$, one finds the formula
\begin{equation}\label{phi}
\begin{split}
\Tr(\Phi[X]P_{\alpha,k})&=a_\alpha^2(b_\alpha-c_\alpha)\Tr(P_{\alpha,k}X)
+\left[a_\alpha^2\gamma_\alpha(c_\alpha-f)+a_\alpha f\sum_{\beta=1}^Na_\beta \gamma_\beta\right]\Tr(X)\\
&=a_\alpha^2(b_\alpha-c_\alpha)\Tr(P_{\alpha,k}X)
-\frac{a_\alpha}{d}\left[a_\alpha^2(b_\alpha-c_\alpha)-\sum_{\beta=1}^Na_\beta \gamma_\beta\right]\Tr(X)
\end{split}
\end{equation}
that is the same as for generalized equiangular measurements in ref. \cite{GEAM}. In the last equality, the following identity between the coefficients has been used,
\begin{equation}\label{caf}
c_\alpha-f=-\frac{b_\alpha-c_\alpha}{M_\alpha}.
\end{equation}
Now, eq. (\ref{phi}) can be rewritten as
\begin{equation}
\Tr\left\{\left[-\Phi+S_\alpha\oper+\left(\mu-\frac{S_\alpha}{d}\right)d\Phi_0\right][X]P_{\alpha,k}\right\}=0,
\end{equation}
where $S_\alpha=a_\alpha^2(b_\alpha-c_\alpha)$ and $\mu$ is given by eq. (\ref{mu}).
The above trace relation holds for all $X$ and $P_{\alpha,k}$, and hence
\begin{equation}
\Phi=S_\alpha\oper+\left(\mu-\frac{S_\alpha}{d}\right)d\Phi_0.
\end{equation}
Comparing our results with $\Phi=\kappa_+d\Phi_0+\kappa_-\oper$ from eq. (\ref{kap}), we notice that MU GETFs are conical 2-designs if and only if they are equidistant: $S_\alpha\equiv S$ for all $\alpha=1,\ldots,N$. Under this condition, we identify
\begin{equation}
\kappa_+=\mu-\frac{S}{d},\qquad \kappa_-=S.
\end{equation}
Finally, it is easy to ckeck that $\kappa_->0$ and
\begin{equation}
\kappa_+-\kappa_-=\mu-\frac{d+1}{d}S\geq 0.
\end{equation}
\end{proof}

Note that the parameters $\kappa_\pm$ for the equidistant MU GETFs can be rewritten as
\begin{equation}
\kappa_+=\mathcal{C}_{\max}-S,\qquad \kappa_-=S,
\end{equation}
where 
\begin{equation}
\mathcal{C}_{\max}=\frac{d-1}{d}S+\mu
\end{equation}
is the upper bound for the index of coincidence $\mathcal{C}(\rho)$, reached for pure states ($\Tr(\rho^2)=1$). Therefore, $\kappa_\pm$ are fully characterized by the maximal index of coincidence and the Frobenius distance $S$.

The conical 2-designs from Proposition \ref{MUGETFStoconical} are in general no longer homogeneous for $N\geq 2$. Moreover, there exists a large class of MU GETFs that go beyond conical 2-designs. Neither of these was observed for a single GETF ($N=1$).

Now, we formulate an inverse implication for informationally overcomplete conical 2-designs. The main difference from the case in Proposition \ref{conicalgeneral} is that now there exist partial sums of $\Tr(R_k)R_k$ equal to a rescaled identity operator.

\begin{Proposition}\label{conicalgeneral2}
Consider $N$ sets $\mathcal{R}_\alpha=\{R_{\alpha,k};\,k=1,\ldots,M_\alpha\}$ of linearly independent semi-positive operators that satisfy $\sum_{k=1}^{M_\alpha}\Tr(R_{\alpha,k})R_{\alpha,k}=\kappa_\alpha\mathbb{I}_d$ for some $\kappa_\alpha>0$ and together span the operator space $\mathcal{B}(\mathcal{H})$. If $\mathcal{R}=\cup_{\alpha=1}^N\mathcal{R}_\alpha$ with $|\mathcal{R}|=d^2+N-1$ is a conical 2-design, then its elements obey the trace relations
\begin{equation}\label{traces}
\begin{split}
\Tr(R_{\alpha,k})&=w_{\alpha,k},\\
\Tr(R_{\alpha,k}^2)&=\kappa_-+\frac{\kappa_\alpha-\kappa_-}{d\kappa_\alpha}w_{\alpha,k}^2,\\
\Tr(R_{\alpha,k}R_{\alpha,\ell})&=\frac{\kappa_\alpha-\kappa_-}{d\kappa_\alpha}w_{\alpha,k}w_{\alpha,\ell},
\qquad k\neq\ell,\\
\Tr(R_{\alpha,k}R_{\beta,\ell})&=\frac 1d w_{\alpha,k}w_{\beta,\ell},\qquad \alpha\neq\beta.
\end{split}
\end{equation}
\end{Proposition}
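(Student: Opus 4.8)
The plan is to follow the strategy of Proposition \ref{conicalgeneral}, but to confront the new feature that the $d^2+N-1$ operators $R_{\alpha,k}$ are now linearly \emph{dependent}. First I would record the most general real trace data
\begin{equation}
\begin{aligned}
\Tr(R_{\alpha,k})&=w_{\alpha,k}, & \Tr(R_{\alpha,k}^2)&=x_{\alpha,k},\\
\Tr(R_{\alpha,k}R_{\alpha,\ell})&=y_{\alpha,k\ell}\ (k\neq\ell), & \Tr(R_{\alpha,k}R_{\beta,\ell})&=z_{\alpha k,\beta\ell}\ (\alpha\neq\beta),
\end{aligned}
\end{equation}
observing that semi-positivity together with $R_{\alpha,k}\neq 0$ forces every $w_{\alpha,k}>0$ (a zero-trace positive operator vanishes), which will license the divisions appearing later. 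Writing the conical 2-design condition as $\Phi=\kappa_+d\Phi_0+\kappa_-\oper$ with $\Phi[X]=\sum_{\alpha,k}R_{\alpha,k}\Tr(R_{\alpha,k}X)$ and evaluating $\Tr(\Phi[X]R_{\alpha,k})$ on both sides exactly as in eqs. (\ref{LHS})--(\ref{RHS}), I obtain, since $\mathcal{R}$ spans $\mathcal{B}(\mathcal{H})$, the operator identity
\begin{equation}\label{plan-star}
(x_{\alpha,k}-\kappa_-)R_{\alpha,k}+\sum_{\ell\neq k}y_{\alpha,k\ell}R_{\alpha,\ell}+\sum_{\beta\neq\alpha}\sum_{\ell}z_{\alpha k,\beta\ell}R_{\beta,\ell}=\kappa_+w_{\alpha,k}\mathbb{I}_d
\end{equation}
for every pair $(\alpha,k)$.

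The main obstacle is precisely the place where Proposition \ref{conicalgeneral} invoked linear independence to read off the coefficients: here that fails. My resolution is to characterize the space of linear relations among the $R_{\alpha,k}$ exactly. Since $\{\mathbb{I}_d,R_{\alpha,k};\,k<M_\alpha\}$ is a basis (Appendix B) while there are $d^2+N-1$ operators, the relation space is $(N-1)$-dimensional, and I would identify it with the differences of the partial-sum constraints $\sum_m w_{\alpha,m}R_{\alpha,m}=\kappa_\alpha\mathbb{I}_d$: namely $\sum_{\gamma,m}\lambda_{\gamma,m}R_{\gamma,m}=0$ holds if and only if $\lambda_{\gamma,m}=\mu_\gamma w_{\gamma,m}$ for scalars $\mu_\gamma$ with $\sum_\gamma\mu_\gamma\kappa_\gamma=0$ (the map $\mu\mapsto(\mu_\gamma w_{\gamma,m})$ is injective because each frame contains an operator of nonzero trace, and the image has the right dimension $N-1$). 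I then homogenize eq. (\ref{plan-star}) by eliminating $\mathbb{I}_d$ through the $\alpha$-frame constraint $\mathbb{I}_d=\kappa_\alpha^{-1}\sum_m w_{\alpha,m}R_{\alpha,m}$, so that its coefficient vector must lie in this relation space. Matching coefficients yields $z_{\alpha k,\beta\ell}=\mu_\beta w_{\beta,\ell}$ for $\beta\neq\alpha$ and $y_{\alpha,k\ell}-\kappa_\alpha^{-1}\kappa_+w_{\alpha,k}w_{\alpha,\ell}=\mu_\alpha w_{\alpha,\ell}$, where the $\mu$'s may a priori depend on the fixed pair $(\alpha,k)$.

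The remaining work is to pin down the proportionality factors by symmetry and contraction. The symmetry $y_{\alpha,k\ell}=y_{\alpha,\ell k}$ forces $\mu_\alpha^{(\alpha,k)}\propto w_{\alpha,k}$, and the symmetry $z_{\alpha k,\beta\ell}=z_{\beta\ell,\alpha k}$ forces $z$ to factor as a symmetric constant times $w_{\alpha,k}w_{\beta,\ell}$; together these give $x_{\alpha,k}=\kappa_-+C_\alpha w_{\alpha,k}^2$, $y_{\alpha,k\ell}=C_\alpha w_{\alpha,k}w_{\alpha,\ell}$, and $z_{\alpha k,\beta\ell}=D_{\alpha\beta}w_{\alpha,k}w_{\beta,\ell}$ with $D_{\alpha\beta}=D_{\beta\alpha}$. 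Finally I would fix $C_\alpha$ and $D_{\alpha\beta}$ by contracting the frame-$\alpha$ constraint against $R_{\alpha,k}$ and against $R_{\beta,\ell}$ ($\beta\neq\alpha$): using $\sum_m w_{\alpha,m}^2=\Tr(\kappa_\alpha\mathbb{I}_d)=d\kappa_\alpha$, the first contraction gives $\kappa_-+C_\alpha\,d\kappa_\alpha=\kappa_\alpha$, hence $C_\alpha=(\kappa_\alpha-\kappa_-)/(d\kappa_\alpha)$, and the second gives $D_{\alpha\beta}\,d\kappa_\alpha=\kappa_\alpha$, hence $D_{\alpha\beta}=1/d$. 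These are exactly the asserted relations (\ref{traces}). I expect the relation-space characterization to be the genuinely delicate ingredient, since it is what replaces linear independence; once it is in hand, the symmetry-and-contraction steps are short and routine.
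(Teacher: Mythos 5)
Your proposal is correct, and it reaches the paper's conclusion by a genuinely different resolution of the one step where linear independence fails. The paper's own proof shares your skeleton exactly up to the operator identity (your eq.\ for $F(\alpha,k)=0$): general trace data $w,x,y,z$, the map $\Phi$, traces against each $R_{\alpha,k}$, plus the contraction identities $\sum_m w_{\alpha,m}^2=d\kappa_\alpha$, $\kappa_\alpha w_{\alpha,k}=w_{\alpha,k}x_{\alpha,k}+\sum_{\ell\neq k}w_{\alpha,\ell}y_{\alpha,k,\ell}$ and $\kappa_\beta w_{\alpha,k}=\sum_\ell w_{\beta,\ell}z_{\alpha,k;\beta,\ell}$, which coincide with your two contractions. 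Where you diverge is the linear algebra: the paper (Appendix D) eliminates the distinguished element $R_{\alpha,M_\alpha}=w_{\alpha,M_\alpha}^{-1}\bigl(\kappa_\alpha\mathbb{I}_d-\sum_{k<M_\alpha}w_{\alpha,k}R_{\alpha,k}\bigr)$, expands $F(\alpha,k)=0$ in the linearly independent set $\{\mathbb{I}_d,\,R_{\alpha,k};\,k<M_\alpha\}$, and solves the resulting system of coefficient equations (i.a)--(iv.a), (i.b)--(iii.b) case by case; you instead characterize the full $(N-1)$-dimensional relation space as $\{(\mu_\gamma w_{\gamma,m}):\sum_\gamma\mu_\gamma\kappa_\gamma=0\}$, homogenize by absorbing $\mathbb{I}_d$ into the $\alpha$-frame constraint, and pin down the a priori $(\alpha,k)$-dependent $\mu$'s via the symmetries $y_{\alpha,k\ell}=y_{\alpha,\ell k}$ and $z_{\alpha k,\beta\ell}=z_{\beta\ell,\alpha k}$ of the Gram data. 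Your relation-space claim is sound: the image of $\mu\mapsto(\mu_\gamma w_{\gamma,m})$ lies in the relation space precisely because $\sum_\gamma\mu_\gamma\kappa_\gamma=0$, it is injective since each frame has an element of nonzero trace, and its dimension $N-1$ matches the codimension forced by $|\mathcal{R}|=d^2+N-1$ and the spanning hypothesis, so equality holds. What your route buys is symmetry and brevity: it never privileges the index $M_\alpha$, avoids the paper's division by $w_{\alpha,M_\alpha}$ and the two-case bookkeeping of eqs.\ (\ref{gen1})--(\ref{gen2}), and makes transparent why the answer factorizes as $C_\alpha w_{\alpha,k}w_{\alpha,\ell}$ and $D_{\alpha\beta}w_{\alpha,k}w_{\beta,\ell}$; what the paper's route buys is that it needs only the independence of $\{\mathbb{I}_d,\,R_{\alpha,k};\,k<M_\alpha\}$ (its Appendix B) rather than an exact description of all relations. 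Two minor points to make explicit: $w_{\alpha,k}>0$ relies on the paper's standing assumption of non-zero design elements (the paper's own Appendix D implicitly uses this too), and your symmetry step for $\mu^{(\alpha,k)}_\alpha$ needs $M_\alpha\geq 2$, the degenerate case $M_\alpha=1$ being vacuous for $y_\alpha$ and checked directly for $x_{\alpha,1}$.
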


\begin{proof}
Let us start by observing that if $\sum_{k=1}^{M_\alpha}\Tr(R_{\alpha,k})R_{\alpha,k}=\kappa_\alpha\mathbb{I}_d$ and $|\mathcal{R}|=d^2+N-1$, then the set $\{\mathbb{I}_d,\,R_{\alpha,k};\,k=1,\ldots,M_\alpha-1;\,\alpha=1,\ldots,N\}$
consists of operators that are linearly independent (for more details, see Appendix B). Next, we assume that a conical 2-design obeys the most general trace relations:
\begin{equation}
\begin{split}
\Tr(R_{\alpha,k})&=w_{\alpha,k},\\
\Tr(R_{\alpha,k}^2)&=x_{\alpha,k},\\
\Tr(R_{\alpha,k}R_{\alpha,\ell})&=y_{\alpha,k,\ell},\qquad k\neq\ell,\\
\Tr(R_{\alpha,k}R_{\beta,\ell})&=z_{\alpha,k;\beta,\ell},\qquad \alpha\neq\beta.
\end{split}
\end{equation}
As $R_{\alpha,k}$ are Hermitian, all the numbers are real. Moreover, from the cyclicity of the trace, $y_{\alpha,k,\ell}=y_{\alpha,\ell,k}$ and $z_{\alpha,k;\beta,\ell}=z_{\beta,\ell;\alpha,k}$. From eq. (\ref{conicsum}), it follows that
\begin{equation}\label{kappas}
\sum_{\alpha=1}^N\kappa_\alpha=\kappa,\qquad{\rm where}\qquad
\sum_{k=1}^{M_\alpha}w_{\alpha,k}R_{\alpha,k}=\kappa_\alpha\mathbb{I}_d
\end{equation}
and $\kappa=d\kappa_++\kappa_-$. Additionally, the trace relations imply
\begin{equation}\label{cc2}
\begin{split}
\sum_{k=1}^{M_\alpha}w_{\alpha,k}^2&=\sum_{k=1}^{M_\alpha}w_{\alpha,k}\Tr(R_{\alpha,k})
=\kappa_\alpha\Tr(\mathbb{I}_d)=d\kappa_\alpha,\\
\kappa_\alpha w_{\alpha,k}&=\kappa_\alpha\Tr(R_{\alpha,k})
=\sum_{\ell=1}^{M_\alpha}w_{\alpha,\ell}\Tr(R_{\alpha,k}R_{\alpha,\ell})
=w_{\alpha,k}x_{\alpha,k}+\sum_{\ell\neq k}w_{\alpha,\ell} y_{\alpha,k,\ell},
\end{split}
\end{equation}
which is a direct generalization of eq. (\ref{cc}). For $\alpha\neq\beta$, we obtain
\begin{equation}\label{cc3}
\begin{split}
\kappa_\beta w_{\alpha,k}&=\kappa_\beta\Tr(R_{\alpha,k})
=\sum_{\ell=1}^{M_\beta}w_{\beta,\ell}\Tr(R_{\alpha,k}R_{\beta,\ell})
=\sum_{\ell=1}^{M_\beta}w_{\beta,\ell}z_{\alpha,k;\beta,\ell}.
\end{split}
\end{equation}
This, together with the first line of eq. (\ref{cc2}), gives
\begin{equation}
d\kappa_\alpha\kappa_\beta=\sum_{k=1}^{M_\alpha}\sum_{\ell=1}^{M_\beta}
w_{\alpha,k}w_{\beta,\ell}z_{\alpha,k;\beta,\ell}.
\end{equation}

Now, for each $\mathcal{R}_\alpha$, we construct $\Phi_\alpha[X]=\sum_{k=1}^{M_\alpha}R_{\alpha,k}\Tr(R_{\alpha,k}X)$ and then calculate
\begin{equation}
\begin{split}
\Tr(\Phi_\alpha[X]R_{\alpha,k})=\sum_{\ell=1}^{M_\alpha}\Tr(R_{\alpha,\ell}R_{\alpha,k})
\Tr(R_{\alpha,\ell}X)=x_{\alpha,k}\Tr(R_{\alpha,k}X)+\sum_{\ell\neq k}
y_{\alpha,k,\ell}\Tr(R_{\alpha,\ell}X),
\end{split}
\end{equation}
as well as, for any $\beta\neq\alpha$,
\begin{equation}
\begin{split}
\Tr(\Phi_\beta[X]R_{\alpha,k})&=\sum_{\ell=1}^{M_\beta}\Tr(R_{\beta,\ell}R_{\alpha,k})
\Tr(R_{\beta,\ell}X)=\sum_{\ell=1}^{M_\beta}z_{\alpha,k;\beta,\ell}\Tr(R_{\beta,\ell}X).
\end{split}
\end{equation}
Therefore, for the total map $\Phi=\sum_{\alpha=1}^N\Phi_\alpha$, we find
\begin{equation}\label{phi2}
\begin{split}
\Tr(\Phi[X]R_{\alpha,k})&=x_{\alpha,k}\Tr(R_{\alpha,k}X)+\sum_{\ell\neq k}
y_{\alpha,k,\ell}\Tr(R_{\alpha,\ell}X)+\sum_{\beta\neq\alpha}\sum_{\ell=1}^{M_\beta}z_{\alpha,k;\beta,\ell}
\Tr(R_{\beta,\ell}X).
\end{split}
\end{equation}

Next, we have to compare our results with the general formula. Recall that the collection $\mathcal{R}=\cup_{\alpha=1}^N\mathcal{R}_\alpha$ forms a conical 2-design if and only if
\begin{equation}\label{kap3}
\Phi=\kappa_+d\Phi_0+\kappa_-\oper,\qquad \kappa_+\geq\kappa_->0.
\end{equation}
Using the above equation, one gets
\begin{equation}\label{phi3}
\Tr(\Phi[X]R_{\alpha,k})=w_{\alpha,k}\kappa_+\Tr(X)+\kappa_-\Tr(R_{\alpha,k}X).
\end{equation}
By comparing eqs. (\ref{phi2}) and (\ref{phi3}), we obtain an equivalent formula for conical 2-designs,
\begin{equation}\label{trrr}
\Tr\left\{\left[(x_{\alpha,k}-\kappa_-)R_{\alpha,k}
+\sum_{\ell\neq k}y_{\alpha,k,\ell}R_{\alpha,\ell}
+\sum_{\beta\neq\alpha}\sum_{\ell=1}^{M_\beta}z_{\alpha,k;\beta,\ell}
R_{\beta,\ell}-w_{\alpha,k}\kappa_+\mathbb{I}_d\right]X\right\}=0,
\end{equation}
which holds for all Hermitian $X$. Therefore, the expression in the square brackets vanishes. In Appendix D, we show that this is the case provided that
\begin{equation}
\begin{split}
x_{\alpha,k}=\kappa_-+\frac{\kappa_\alpha-\kappa_-}{d\kappa_\alpha}w_{\alpha,k}^2,\qquad
y_{\alpha,k,\ell}=\frac{\kappa_\alpha-\kappa_-}{d\kappa_\alpha}w_{\alpha,k}w_{\alpha,\ell},\qquad
z_{\alpha,k;\beta,\ell}=\frac 1d w_{\alpha,k}w_{\beta,\ell}.
\end{split}
\end{equation}
Finally, the condition $\kappa_\alpha\geq\kappa_-$ guarantees that $y_{\alpha,k,\ell}\geq 0$.
\end{proof}

Interestingly, $\kappa_-$ controls the parameters in the trace relations, whereas $\kappa_+$ only appears as the constraint for the sum over all $\kappa_\alpha$.
Observe that the semi-positivity of $R_{\alpha,k}$ imposes the following conditions on the parameters in Proposition \ref{conicalgeneral2} (see Appendix E),
\begin{equation}\label{kappaa}
\kappa_\alpha\geq\kappa_-,\qquad \frac{d\kappa_-\kappa_\alpha}{(d-1)\kappa_\alpha+\kappa_-}\leq w_{\alpha,k}^2<d\kappa_\alpha.
\end{equation}
Also, $\mathcal{R}_\alpha$ are equidistant if and only if $w_{\alpha,k}=w_{\alpha,\ell}$, which can be checked by computing
\begin{equation}
D_2^2(R_{\alpha,k},R_{\alpha,\ell})=\kappa_-+\frac{\kappa_\alpha-\kappa_-} {2d\kappa_\alpha}(w_{\alpha,k}-w_{\alpha,\ell})^2.
\end{equation}

\begin{Example}
A good example of a conical 2-design that belongs to the class from Proposition \ref{conicalgeneral2} is given by two collections of qubit operators:
\begin{equation}
\begin{split}
R_{1,1}&=\frac {1}{\sqrt{6}} \begin{pmatrix} 1 & 0 \\ 0 & 0 \end{pmatrix},\\
R_{1,2}&=\frac {1}{\sqrt{6}} \begin{pmatrix} 0 & 0 \\ 0 & 1 \end{pmatrix},
\end{split}\qquad\qquad\qquad
\begin{split}
R_{2,1}&=\frac {1}{2\sqrt{30}} 
\begin{pmatrix} 2\sqrt{3} & \sqrt{5}-i\sqrt{2} \\ \sqrt{5}+i\sqrt{2} & 2\sqrt{3} \end{pmatrix},\\
R_{2,2}&=\frac {1}{2\sqrt{30}} 
\begin{pmatrix} 2\sqrt{3} & -\sqrt{5}-i\sqrt{2} \\ -\sqrt{5}+i\sqrt{2} & 2\sqrt{3} \end{pmatrix},\\
R_{2,3}&=\frac {1}{2\sqrt{15}} 
\begin{pmatrix} 2\sqrt{2} & i\sqrt{3} \\ -i\sqrt{3} & 2\sqrt{2} \end{pmatrix}.
\end{split}
\end{equation}
$\mathcal{R}_1$ is a rescaled von Neumann measurement, whereas $\mathcal{R}_2$ is composed of three full rank operators with $\Tr(R_{2,1})=\Tr(R_{2,2})\neq\Tr(R_{2,3})$. Note that $R_{1,1}+R_{1,2}=\mathbb{I}_2/\sqrt{6}$ and $R_{2,1}\Tr(R_{2,1})+R_{2,2}\Tr(R_{2,2})+R_{2,3}\Tr(R_{2,3})=2\mathbb{I}_2/3$. Moreover, $R_{2,1}$ and $R_{2,2}$ have the same eigenvalues. 
The parameters that characterize this conical design are $\kappa_+=1/3$ and $\kappa_-=1/6$, which are identical to those in our previous two examples. It proves that the same values of $\kappa_\pm$ can correspond to both a GETF and MU GETFs.
\end{Example}

Under additional conditions, an implication relation between conical 2-designs and MU GETFs can be established. Note that it goes beyond the homogeneous designs because the traces of operators from different GETFs do not have to be the same. However, if the traces of $R_{\alpha,k}\in\mathcal{R}_\alpha$ are equal, some interesting properties can be recovered.

\begin{Proposition}\label{equiv2}
If $\mathcal{R}$ is a conical 2-design from Proposition \ref{conicalgeneral2}, then the following statements are equivalent:
\begin{enumerate}[label=(\it\roman*)]
\item $R_{\alpha,k}\in\mathcal{R}_\alpha$ are of equal trace: $\Tr(R_{\alpha,k})=w_\alpha$;
\item $R_{\alpha,k}^2\in\mathcal{R}_\alpha$ are of equal trace: $\Tr(R_{\alpha,k}^2)=x_\alpha$;
\item $R_{\alpha,k}\in\mathcal{R}_\alpha$ satisfy the symmetry condition $\Tr(R_{\alpha,k}R_{\alpha,\ell})=y_\alpha$ for all $k\neq\ell$;
\item $R_{\alpha,k}\in\mathcal{R}_\alpha$ sum up to a rescaled identity operator $\sum_{k=1}^{M_\alpha}R_{\alpha,k}=\eta_\alpha\mathbb{I}_d$ for some $\eta_\alpha>0$;
\item each $\mathcal{R}_\alpha$ is a GETF;
\item $\mathcal{R}_\alpha$ are mutually unbiased: $\Tr(R_{\alpha,k}R_{\beta,\ell})=f\Tr(R_{\alpha,k})\Tr(R_{\beta,\ell})$ for all $\alpha\neq\beta$.
\end{enumerate}
\end{Proposition}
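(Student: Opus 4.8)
The plan is to reduce the whole statement to the single-frame argument of Proposition \ref{equiv}, carried out separately for each index $\alpha$. The crucial input is that Proposition \ref{conicalgeneral2} already pins down every overlap as a fixed function of the traces $w_{\alpha,k}=\Tr(R_{\alpha,k})$: writing $\lambda_\alpha=(\kappa_\alpha-\kappa_-)/(d\kappa_\alpha)$, eq.~(\ref{traces}) gives $x_{\alpha,k}=\kappa_-+\lambda_\alpha w_{\alpha,k}^2$ and $y_{\alpha,k,\ell}=\lambda_\alpha w_{\alpha,k}w_{\alpha,\ell}$. Since these depend only on the $w_{\alpha,k}$, I would organize the proof as a ``hub and spoke'' scheme with $(i)$ as the hub, proving $(i)\Leftrightarrow(ii)$, $(i)\Leftrightarrow(iii)$, $(i)\Leftrightarrow(iv)$, $(i)\Leftrightarrow(v)$, and treating $(vi)$ last. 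Throughout, I would exploit that semi-positivity forces $w_{\alpha,k}>0$ and that the operators inside each $\mathcal{R}_\alpha$ are linearly independent, a standing hypothesis of Proposition \ref{conicalgeneral2}.

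For the core spokes, $(i)\Rightarrow(ii)$ and $(i)\Rightarrow(iii)$ are immediate substitutions into the formulas for $x_{\alpha,k}$ and $y_{\alpha,k,\ell}$. The reverse directions carry the real content: $(ii)\Rightarrow(i)$ follows because, for $\kappa_\alpha>\kappa_-$, the coefficient $\lambda_\alpha$ is strictly positive, so constancy of $x_{\alpha,k}$ forces constancy of $w_{\alpha,k}^2$, which $w_{\alpha,k}>0$ upgrades to constancy of $w_{\alpha,k}$. For $(iv)$, the forward implication uses $\sum_k w_{\alpha,k}R_{\alpha,k}=\kappa_\alpha\mathbb{I}_d$ from eq.~(\ref{kappas}), which under $(i)$ collapses to $\sum_k R_{\alpha,k}=(\kappa_\alpha/w_\alpha)\mathbb{I}_d$; the converse compares $\kappa_\alpha\sum_k R_{\alpha,k}$ with $\eta_\alpha\sum_k w_{\alpha,k}R_{\alpha,k}$ and reads off $w_{\alpha,k}=\kappa_\alpha/\eta_\alpha$ from linear independence, exactly as in Proposition \ref{equiv}. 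Finally, $(i)\Leftrightarrow(v)$ is obtained by noting that once the traces within $\mathcal{R}_\alpha$ are equal, the relations of eq.~(\ref{traces}) become verbatim the defining relations of Definition \ref{GETF} with $a_\alpha=w_\alpha$, $b_\alpha=x_\alpha/w_\alpha^2$, $c_\alpha=y_\alpha/w_\alpha^2$, and $\gamma_\alpha=\eta_\alpha$; the admissible range of $b_\alpha$ in eq.~(\ref{brange2}) is then secured per frame by the same estimate used in Proposition \ref{conicaltoGETF}, while $(v)\Rightarrow(i)$ holds by definition of a GETF.

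It remains to incorporate $(vi)$. Here I would point out that the inter-frame overlaps are already fixed by the last line of eq.~(\ref{traces}) as $\Tr(R_{\alpha,k}R_{\beta,\ell})=\frac{1}{d}w_{\alpha,k}w_{\beta,\ell}=\frac{1}{d}\Tr(R_{\alpha,k})\Tr(R_{\beta,\ell})$, so the mutual-unbiasedness relation of Definition \ref{MUGETFS} holds with $f=1/d$ on the entire class; thus $(vi)$ follows from the standing hypothesis, is implied by each of $(i)$--$(v)$, and together with them certifies that $\mathcal{R}=\cup_\alpha\mathcal{R}_\alpha$ is a genuine family of mutually unbiased GETFs. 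The main obstacle I anticipate is not any single computation but the careful treatment of degenerate and small cases in the reverse spokes: the value $\kappa_\alpha=\kappa_-$, which annihilates $\lambda_\alpha$ so that $(ii)$ and $(iii)$ stop constraining the traces and must be cross-checked against the positivity bounds of eq.~(\ref{kappaa}); and frames with $M_\alpha=2$, for which condition $(iii)$ reduces to a single overlap and is vacuously true, so that forcing $(iii)\Rightarrow(i)$ in these cases may require an additional global argument or the mild restriction $M_\alpha\geq 3$.
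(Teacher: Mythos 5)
Your plan for $(i)$--$(v)$ coincides with the paper's own proof: the paper likewise uses $(i)$ as the hub, reads $(i)\Leftrightarrow(ii)$ and $(i)\Leftrightarrow(iii)$ directly off the formulas $x_{\alpha,k}=\kappa_-+\frac{\kappa_\alpha-\kappa_-}{d\kappa_\alpha}w_{\alpha,k}^2$ and $y_{\alpha,k,\ell}=\frac{\kappa_\alpha-\kappa_-}{d\kappa_\alpha}w_{\alpha,k}w_{\alpha,\ell}$ from Proposition \ref{conicalgeneral2}, proves $(i)\Leftrightarrow(iv)$ by comparing $\kappa_\alpha\sum_k R_{\alpha,k}$ with $\eta_\alpha\sum_k w_{\alpha,k}R_{\alpha,k}$ and invoking linear independence within each $\mathcal{R}_\alpha$, and settles $(v)$ by the definition of a GETF. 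Up to that point your proposal is correct and essentially identical to the paper.

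The genuine gap is at $(vi)$. You correctly observe that the literal condition $\Tr(R_{\alpha,k}R_{\beta,\ell})=f\,\Tr(R_{\alpha,k})\Tr(R_{\beta,\ell})$ holds with $f=1/d$ for \emph{every} design in this class, by the last line of eq. (\ref{traces}) — but you then only record that $(vi)$ is implied by the standing hypothesis, and you never prove $(vi)\Rightarrow(i)$, so the chain of equivalences is not closed. Worse, your observation, taken at face value, shows the literal $(vi)$ \emph{cannot} be equivalent to $(i)$: the paper's own two-frame qubit Example following Proposition \ref{conicalgeneral2} has $\Tr(R_{2,1})\neq\Tr(R_{2,3})$ yet, being a design covered by that proposition, satisfies the literal $(vi)$ with $f=1/d$. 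The paper's proof closes the loop by implicitly reading $(vi)$ as Definition \ref{MUGETFS} intends it, namely $z_{\alpha,k;\beta,\ell}=\frac{1}{d}w_\alpha w_\beta$ with frame-constant traces, i.e. the inter-frame overlaps are constant in $k,\ell$. Under that reading the missing direction is one line, which you should supply: since $z_{\alpha,k;\beta,\ell}=\frac{1}{d}w_{\alpha,k}w_{\beta,\ell}$ always holds, constancy of $z_{\alpha,k;\beta,\ell}$ forces $w_{\alpha,k}w_{\beta,\ell}$ to be constant, and positivity of the traces (fix one pair $(\beta,\ell)$ with $\beta\neq\alpha$, vary $k$; this uses $N\geq 2$) yields $(i)$. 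Your two flagged degenerate cases — $\kappa_\alpha=\kappa_-$, which annihilates the coefficient in $(ii)$ and $(iii)$, and $M_\alpha=2$, which makes $(iii)$ vacuous for that frame — are real fragilities, and spotting them puts you ahead of the paper, which asserts these biconditionals without excluding either case; but leaving them unresolved means your reverse spokes $(ii)\Rightarrow(i)$ and $(iii)\Rightarrow(i)$ are, like the paper's, established only when $\kappa_\alpha>\kappa_-$ and $M_\alpha\geq 3$.
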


\begin{proof}
Using the notation from Proposition \ref{conicalgeneral2}, we immediately show that
\begin{equation}
(i)\quad\bigforall_k\quad w_{\alpha,k}=w_\alpha\quad\Longleftrightarrow\quad
(ii)\quad \bigforall_{k}\quad x_{\alpha,k}=\kappa_-+\frac{\kappa_\alpha-\kappa_-}{d\kappa_\alpha}w_{\alpha,k}^2 =x_\alpha,
\end{equation}
\begin{equation}
(i)\quad\bigforall_k\quad w_{\alpha,k}=w_\alpha\quad\Longleftrightarrow\quad
(iii)\quad \bigforall_{k\neq\ell}\quad y_{\alpha,k,\ell}=\frac{\kappa_\alpha-\kappa_-}{d\kappa_\alpha}w_{\alpha,k} w_{\alpha,\ell}=y_\alpha,
\end{equation}
and
\begin{equation}
(i)\quad\bigforall_k\quad w_{\alpha,k}=w_\alpha\quad\Longleftrightarrow\quad
(vi)\quad \bigforall_{\beta\neq\alpha}\quad z_{\alpha,k;\beta\ell}=\frac{1}{d}w_{\alpha} w_{\beta}= \frac{1}{d} \Tr(R_{\alpha,k})\Tr(R_{\beta,\ell}),
\end{equation}
which proves mutual unbiasedness of $\mathcal{R}_\alpha$ with $f=1/d$.
Moreover, if $w_{\alpha,k}=w_\alpha$, then $(iv)$ holds for $\eta_\alpha=\kappa_\alpha/w_\alpha>0$, which can be seen from
\begin{equation}
\sum_{k=1}^{M_\alpha}w_{\alpha,k}R_{\alpha,k}=w_\alpha\sum_{k=1}^{M_\alpha}R_{\alpha,k} =\kappa_\alpha\mathbb{I}_d.
\end{equation}
On the other hand, if $(iv)$ is satisfied, then
\begin{equation}
\kappa_\alpha\sum_{k=1}^{M_\alpha}R_{\alpha,k} =\kappa_\alpha\eta_\alpha\mathbb{I}_d=\eta_\alpha\sum_{k=1}^{M_\alpha}w_{\alpha,k}R_{\alpha,k},
\end{equation}
which implies that
\begin{equation}
\sum_{k=1}^{M_\alpha}(\kappa_\alpha-\eta_\alpha w_{\alpha,k})R_{\alpha,k}=0.
\end{equation}
As $R_{\alpha,k}\in\mathcal{R}_\alpha$ are linearly independent, one recovers $w_{\alpha,k}=\kappa_\alpha/\eta_\alpha=w_\alpha$, and hence $(i)$ $\Longleftrightarrow$ $(iv)$. Finally, by definition, $\mathcal{R}_\alpha$ satisfies conditions $(i)$ to $(iv)$ if and only if $(v)$ $\mathcal{R}_\alpha$ is a GETF.
\end{proof}

Now, let us expand these results to further establish a relation between conical 2-designs and MU GETFs.

\begin{Proposition}\label{conicaltoMUGETFS} 
If $\mathcal{R}=\{R_{\alpha,k};\,k=1,\ldots,M_\alpha,\,\alpha=1,\ldots,N\}$ is a conical 2-design from Proposition \ref{conicalgeneral2} with $\Tr(R_{\alpha,k})=w_\alpha$, then it is a maximal set of equidistant MU GETFs.
\end{Proposition}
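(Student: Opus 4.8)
The plan is to lean heavily on Proposition \ref{equiv2}, which already performs the structural work: under the equal-trace hypothesis $\Tr(R_{\alpha,k})=w_\alpha$, statement $(v)$ gives that each family $\mathcal{R}_\alpha$ is a GETF, statement $(vi)$ gives mutual unbiasedness with $f=1/d$, and statement $(iv)$ gives $\sum_{k=1}^{M_\alpha}R_{\alpha,k}=\eta_\alpha\mathbb{I}_d$ with $\eta_\alpha=\kappa_\alpha/w_\alpha>0$. So the union $\mathcal{R}=\cup_\alpha\mathcal{R}_\alpha$ already has the shape of a MU GETF in the sense of Definition \ref{MUGETFS}; what remains is to pin down the parameters $(a_\alpha,b_\alpha,c_\alpha,f,\gamma_\alpha)$ explicitly, to check that each $b_\alpha$ lies in the admissible range, to prove equidistance, and to note maximality.

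First I would read the parameters off the trace relations of Proposition \ref{conicalgeneral2} specialized to $w_{\alpha,k}=w_\alpha$: namely $a_\alpha=w_\alpha$, $\gamma_\alpha=\eta_\alpha=\kappa_\alpha/w_\alpha>0$, $f=1/d$, $c_\alpha=(\kappa_\alpha-\kappa_-)/(d\kappa_\alpha)$, and $b_\alpha=\kappa_-/w_\alpha^2+(\kappa_\alpha-\kappa_-)/(d\kappa_\alpha)$. The decisive simplification comes from the first line of eq. (\ref{cc2}): with all $w_{\alpha,k}$ equal, $M_\alpha w_\alpha^2=\sum_{k}w_{\alpha,k}^2=d\kappa_\alpha$, so $w_\alpha^2=d\kappa_\alpha/M_\alpha$. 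Substituting collapses $b_\alpha$ to the clean form $b_\alpha=[\kappa_-(M_\alpha-1)+\kappa_\alpha]/(d\kappa_\alpha)$, depending only on $\kappa_-$, $\kappa_\alpha$, and $M_\alpha$.

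Next I would verify the range $1/d<b_\alpha\leq\frac{1}{d}\min\{d,M_\alpha\}$ from eq. (\ref{brange2}). The lower bound is immediate, since $b_\alpha-1/d=\kappa_-(M_\alpha-1)/(d\kappa_\alpha)>0$ because $\kappa_->0$ and $M_\alpha\geq 2$. For the upper bound I would split into cases. When $M_\alpha\geq d$ the requirement is $b_\alpha\leq 1$, equivalently $(M_\alpha-1)\kappa_-\leq(d-1)\kappa_\alpha$, which is exactly what the semi-positivity bound $w_\alpha^2\geq d\kappa_-\kappa_\alpha/[(d-1)\kappa_\alpha+\kappa_-]$ of eq. (\ref{kappaa}) yields after inserting $w_\alpha^2=d\kappa_\alpha/M_\alpha$ and clearing the positive denominators. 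When $M_\alpha<d$ the requirement is $b_\alpha\leq M_\alpha/d$, and a direct computation gives $b_\alpha-M_\alpha/d=(M_\alpha-1)(\kappa_--\kappa_\alpha)/(d\kappa_\alpha)\leq 0$ from $\kappa_\alpha\geq\kappa_-$. Hence each $\mathcal{R}_\alpha$ is a genuine GETF and their union meets Definition \ref{MUGETFS}.

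Finally, equidistance falls out almost for free: with $b_\alpha-c_\alpha=\kappa_- M_\alpha/(d\kappa_\alpha)$ and $a_\alpha^2=w_\alpha^2=d\kappa_\alpha/M_\alpha$, one finds $S_\alpha=a_\alpha^2(b_\alpha-c_\alpha)=\kappa_-$, which is independent of $\alpha$; thus $S_\alpha\equiv S=\kappa_-$ as required by Definition \ref{equidistant}. Maximality is simply inherited from the standing hypothesis $|\mathcal{R}|=d^2+N-1$, the upper bound of eq. (\ref{bounds}). I expect the only genuinely delicate step to be the $M_\alpha\geq d$ branch of the $b_\alpha$ upper bound, since it is the single place where semi-positivity really enters through eq. (\ref{kappaa}); every other step is bookkeeping that the identity $w_\alpha^2=d\kappa_\alpha/M_\alpha$ renders transparent.
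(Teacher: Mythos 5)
Your proposal is correct and takes essentially the same route as the paper's own proof: both specialize the trace relations of Proposition~\ref{conicalgeneral2} to $w_{\alpha,k}=w_\alpha$, use $w_\alpha^2=d\kappa_\alpha/M_\alpha$ to identify $a_\alpha=w_\alpha$, $\gamma_\alpha=\kappa_\alpha/w_\alpha$, $f=1/d$, verify that $b_\alpha$ lies in the range of eq.~(\ref{brange2}) via the semi-positivity constraints of eq.~(\ref{kappaa}), and conclude equidistance from $S_\alpha=a_\alpha^2(b_\alpha-c_\alpha)=\kappa_-$. Your explicit two-case check of the upper bound on $b_\alpha$ simply unpacks the paper's compressed condition eq.~(\ref{TrRa}), and routing the structural facts through Proposition~\ref{equiv2} is an organizational, not a mathematical, difference.
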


\begin{proof}
The condition that the trace of every $R_{\alpha,k}\in\mathcal{R}_\alpha$ depends only on the index $\alpha$ transforms eq. (\ref{traces}) into
\begin{equation}\label{traces2}
\begin{split}
\Tr(R_{\alpha,k})&=w_{\alpha},\\
\Tr(R_{\alpha,k}^2)&=x_\alpha=\frac{w_\alpha^2}{d}+\frac{M_\alpha-1}{M_\alpha} \kappa_-,\\
\Tr(R_{\alpha,k}R_{\alpha,\ell})&=y_\alpha=\frac{w_\alpha^2}{d}-\frac{\kappa_-}{M_\alpha},
\qquad k\neq\ell,\\
\Tr(R_{\alpha,k}R_{\beta,\ell})&=z_{\alpha\beta}=\frac 1d w_{\alpha}w_{\beta},\qquad \alpha\neq\beta,
\end{split}
\end{equation}
where $\sum_{\alpha=1}^NM_\alpha w_\alpha^2=d\kappa$. If we compare the above coefficients with Definition \ref{MUGETFS} of MU GETFs, we find that
\begin{equation}
a_\alpha=w_\alpha,\qquad b_\alpha=\frac{x_\alpha}{w_\alpha^2},\qquad c_\alpha=\frac{y_\alpha}{w_\alpha^2},\qquad f=\frac{z_{\alpha\beta}}{w_\alpha w_\beta},
\end{equation}
and also $\gamma_\alpha=\kappa_\alpha/w_\alpha=w_\alpha M_\alpha/d$. It is straightforward to determine the positivity of the following constants: $\gamma_\alpha>0$, $a_\alpha>0$, and $f>0$ (because $w_\alpha>0$). 
Now, for $\mathcal{R}$ to be a collection of MU GETFs, $b_\alpha$ has to belong to the range given by eq. (\ref{brange2}). This is the case as long as $\kappa_->0$ and
\begin{equation}\label{TrRa}
w_\alpha\geq
\sqrt{\frac{d\kappa_-}{M_\alpha}}\max\left\{1,\sqrt{\frac{M_\alpha-1}{d-1}}\right\}.
\end{equation}
The above condition follows from eq. (\ref{kappaa}) for $w_{\alpha,k}=w_\alpha$ and already guarantees $c_\alpha\geq 0$.
Finally, observe that $S_\alpha=a_\alpha^2(b_\alpha-c_\alpha)=\kappa_-$, which means that these MU GETFs are indeed equidistant.
\end{proof}

The class of homogeneous conical 2-designs is recovered for $R_{\alpha,k}$ that are of equal trace. Then,
\begin{equation}
w_\alpha=w=\frac{d\kappa}{d^2+N-1},\qquad x_\alpha=x=\frac{w^2}{d}+\frac{M_\alpha -1}{M_\alpha}\kappa_-,\qquad y_\alpha=y=\frac{w^2}{d}-\frac{\kappa_-}{M_\alpha}, \qquad z_{\alpha\beta}=z=\frac{w^2}{d}.
\end{equation}
One immediately notes that the constant coefficients $x$, $y$, $z$ can be organized into an increasing sequence: $y<z<x$.

At last, we combine the results of Propositions \ref{MUGETFStoconical} and \ref{conicaltoMUGETFS} to formulate our next main result.

\begin{Theorem}\label{main2}
Consider $N$ GETFs $\mathcal{R}_\alpha=\{R_{\alpha,k};\,k=1,\ldots,M_\alpha\}$ such that $\mathcal{R}=\cup_{\alpha=1}^N\mathcal{R}_\alpha$ spans the operator space $\mathcal{B}(\mathcal{H})$ and $|\mathcal{R}|=d^2+N-1$. Then, $\mathcal{R}$ is a conical 2-design if and only if it is a set of equidistant mutually unbiased generalized equiangular tight frames.
\end{Theorem}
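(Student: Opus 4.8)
The plan is to recognize that Theorem \ref{main2} is an immediate synthesis of the two implications already established in Propositions \ref{MUGETFStoconical} and \ref{conicaltoMUGETFS}, bridged by the equivalences in Proposition \ref{equiv2}. Since the theorem is stated as an iff, I would split the proof into the two directions and invoke the relevant proposition in each, taking care that the hypotheses match exactly.

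For the backward direction (equidistant MU GETFs $\Rightarrow$ conical 2-design), the work is already done: Proposition \ref{MUGETFStoconical} states precisely that if the $\mathcal{P}_\alpha$ are equidistant MU GETFs, then their union is a conical 2-design. The only thing to verify is that the hypotheses of Theorem \ref{main2}---namely that $\mathcal{R}=\cup_\alpha\mathcal{R}_\alpha$ spans $\mathcal{B}(\mathcal{H})$ and $|\mathcal{R}|=d^2+N-1$---are consistent with being a maximal set of MU GETFs, which is exactly the informationally overcomplete case flagged after eq. (\ref{bounds}). So this direction reduces to citing Proposition \ref{MUGETFStoconical}.

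For the forward direction (conical 2-design $\Rightarrow$ equidistant MU GETFs), I would first note that the collection $\mathcal{R}$ of the theorem, being a conical 2-design with $|\mathcal{R}|=d^2+N-1$ whose sets $\mathcal{R}_\alpha$ each satisfy $\sum_k \Tr(R_{\alpha,k})R_{\alpha,k}=\kappa_\alpha\mathbb{I}_d$, is exactly the object analyzed in Proposition \ref{conicalgeneral2}, so its elements obey the trace relations (\ref{traces}). The subtlety is that Proposition \ref{conicaltoMUGETFS} has the extra hypothesis $\Tr(R_{\alpha,k})=w_\alpha$ (equal traces within each $\mathcal{R}_\alpha$), which Theorem \ref{main2} does not assume a priori. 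The main obstacle is therefore to argue that \emph{every} conical 2-design of the stated form (where each $\mathcal{R}_\alpha$ is already assumed to be a GETF) automatically has equal traces within each block; this is supplied by Proposition \ref{equiv2}, in which statement $(v)$ ``each $\mathcal{R}_\alpha$ is a GETF'' is equivalent to statement $(i)$ ``$\Tr(R_{\alpha,k})=w_\alpha$''. Since Theorem \ref{main2} hypothesizes that the $\mathcal{R}_\alpha$ are GETFs, Proposition \ref{equiv2} forces $\Tr(R_{\alpha,k})=w_\alpha$, putting us squarely in the setting of Proposition \ref{conicaltoMUGETFS}.

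With equal traces secured, I would then simply invoke Proposition \ref{conicaltoMUGETFS} to conclude that $\mathcal{R}$ is a maximal set of equidistant MU GETFs, completing the forward direction. The proof is thus a short assembly: backward by Proposition \ref{MUGETFStoconical}, forward by chaining Proposition \ref{conicalgeneral2} (trace relations), Proposition \ref{equiv2} (GETF $\Leftrightarrow$ equal traces within each block), and Proposition \ref{conicaltoMUGETFS} (equal traces $\Rightarrow$ equidistant MU GETFs). I expect no genuinely hard calculation here, since the analytic content lives in the supporting propositions and their appendices; the care required is entirely in matching hypotheses and ensuring the equidistance parameter $S=\kappa_-$ identified in Proposition \ref{conicaltoMUGETFS} coincides with the $\kappa_-=S$ identification in Proposition \ref{MUGETFStoconical}, so that the two directions describe the same correspondence.
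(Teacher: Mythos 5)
Your proposal is correct and takes essentially the same route as the paper, which offers no separate proof beyond the remark that Theorem \ref{main2} is obtained by combining Propositions \ref{MUGETFStoconical} and \ref{conicaltoMUGETFS} --- precisely the two directions you assemble, with the hypotheses ($|\mathcal{R}|=d^2+N-1$, spanning, partial sums $\sum_k\Tr(R_{\alpha,k})R_{\alpha,k}=\kappa_\alpha\mathbb{I}_d$) matched as you describe. Your detour through Proposition \ref{equiv2} to secure equal traces within each block is harmless but redundant: since each $\mathcal{R}_\alpha$ is hypothesized to be a GETF, the constant trace $\Tr(R_{\alpha,k})=a_\alpha$ is already part of Definition \ref{GETF}, so Proposition \ref{conicaltoMUGETFS} applies directly.
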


In particular, if $\Gamma\equiv\sum_{\alpha=1}^N\gamma_\alpha=\sum_{\alpha=1}^Nw_\alpha M_\alpha/d=1$, then Theorem \ref{main2} establishes a one-to-one correspondence between a class of conical 2-designs and equidistant generalized equiangular measurements (GEAMs) \cite{GEAM}. If instead the $N$ GETFs $\mathcal{R}_\alpha$ are POVMs, then one recovers equidistant generalized symmetric measurements \cite{SIC-MUB_general}, which are simply equidistant MU GETFs with all $\gamma_\alpha=1$. 

\begin{Theorem}\label{conicalPOVM2}
Consider $N$ $(1,M_\alpha)$-POVMs $\mathcal{P}_\alpha=\{P_{\alpha,k};\,k=1,\ldots,M_\alpha\}$ such that $\mathcal{P}=\cup_{\alpha=1}^N\mathcal{P}_\alpha$ spans the operator space $\mathcal{B}(\mathcal{H})$ and $|\mathcal{P}|=d^2+N-1$. Then, $\mathcal{P}$ is a conical 2-design if and only if it is an equidistant generalized symmetric measurement.
\end{Theorem}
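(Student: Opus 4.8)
The plan is to obtain Theorem \ref{conicalPOVM2} as a direct specialization of Theorem \ref{main2}, in exactly the way Theorem \ref{conicalPOVM} was obtained from Theorem \ref{main1}. The first observation I would record is that a $(1,M_\alpha)$-POVM is nothing but a GETF with weight $\gamma_\alpha=1$, as noted in Section 3. Hence the collection $\{\mathcal{P}_\alpha\}_{\alpha=1}^N$ is automatically a family of $N$ GETFs, and the standing assumptions that $\mathcal{P}=\cup_{\alpha=1}^N\mathcal{P}_\alpha$ spans $\mathcal{B}(\mathcal{H})$ with $|\mathcal{P}|=d^2+N-1$ are precisely the hypotheses of Theorem \ref{main2}. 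In particular, the block condition required by Proposition \ref{conicalgeneral2} is met: since the GETF trace $\Tr(P_{\alpha,k})=a_\alpha=d\gamma_\alpha/M_\alpha=d/M_\alpha$ is constant within each block, one has $\sum_{k=1}^{M_\alpha}\Tr(P_{\alpha,k})P_{\alpha,k}=a_\alpha\sum_{k=1}^{M_\alpha}P_{\alpha,k}=(d/M_\alpha)\mathbb{I}_d$, so that $\kappa_\alpha=d/M_\alpha>0$.

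With the hypotheses in place, I would simply invoke Theorem \ref{main2}: the set $\mathcal{P}$ is a conical 2-design if and only if it is a collection of equidistant MU GETFs. It then remains to translate ``equidistant MU GETFs'' into ``equidistant generalized symmetric measurement'' under the POVM constraint. For the reverse implication, an equidistant generalized symmetric measurement is by definition an equidistant MU GETF with all $\gamma_\alpha=1$, so Proposition \ref{MUGETFStoconical} (the MU GETF $\to$ conical direction contained in Theorem \ref{main2}) yields a conical 2-design. For the forward implication, Theorem \ref{main2} delivers a set of equidistant MU GETFs; since each $\mathcal{P}_\alpha$ is a POVM, one has $\sum_{k=1}^{M_\alpha}P_{\alpha,k}=\mathbb{I}_d$ and hence $\gamma_\alpha=1$ for every $\alpha$, which is exactly the defining restriction of a generalized symmetric measurement.

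There is essentially no computational content beyond bookkeeping; the only point that needs care --- and the nearest thing to an obstacle --- is verifying that the POVM normalization $\sum_{k=1}^{M_\alpha}P_{\alpha,k}=\mathbb{I}_d$ is consistent with, and in fact forces, $\gamma_\alpha=1$ simultaneously in every block, so that the resulting object lies in the intersection of ``equidistant MU GETFs'' and ``all $\gamma_\alpha=1$''. Once this identification of the weights $\gamma_\alpha$ is pinned down, the equivalence asserted in Theorem \ref{conicalPOVM2} follows immediately from Theorem \ref{main2}.
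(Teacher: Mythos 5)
Your proposal is correct and follows essentially the same route as the paper, which obtains Theorem \ref{conicalPOVM2} precisely as a specialization of Theorem \ref{main2}: a $(1,M_\alpha)$-POVM is a GETF with $\gamma_\alpha=1$, and equidistant generalized symmetric measurements are, by the paper's own identification, equidistant MU GETFs with all $\gamma_\alpha=1$. Your explicit verification of the block condition $\sum_{k=1}^{M_\alpha}\Tr(P_{\alpha,k})P_{\alpha,k}=(d/M_\alpha)\mathbb{I}_d$ required by Proposition \ref{conicalgeneral2} is a detail the paper leaves implicit, and the point you flag about the weights is indeed immediate, since the POVM normalization $\sum_{k=1}^{M_\alpha}P_{\alpha,k}=\mathbb{I}_d$ holds in every block by hypothesis.
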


The known special cases include $(N,M)$-POVMs ($M_\alpha=M$) \cite{SIC-MUB} and MUMs ($M_\alpha=d$, $N=d+1$) \cite{Kalev}.

\section{Conclusions}

In this paper, we have analyzed two important classes of conical 2-designs and their relations to other known constructions: equiangular tight frames of arbitrary rank (GETFs) and their mutually unbiased collections (MU GETFs). First, we consider an informationally complete set of semi-positive operators $\mathcal{R}=\{R_k;\,k=1,\ldots,M\}$. Assuming that all $R_k$ are linearly independent, we find the most general trace relations between the elements of $\mathcal{R}$. As a special class, we consider homogeneous conical 2-designs and analyze their properties. Then, we provide the necessary and sufficient conditions for a set of linearly independent operators to be a homogeneous conical 2-design. We also give an example of $d^2$ linearly independent operators that form an inhomogeneous conical 2-design, thus proving that this set is non-empty. A one-to-one correspondence is also given for conical 2-designs constructed from POVMs. Analogical results are then obtained for an informationally overcomplete set $\mathcal{R}=\{R_{\alpha,k};\,\alpha=1,\ldots,N;\,k=1,\ldots,M_\alpha\}$, where partial sums $\sum_{k=1}^{M_\alpha}\Tr(R_{\alpha,k})R_{\alpha,k}$ are linearly proportional to the identity operator $\mathbb{I}_d$. Then, an equivalence relation is established between equidistant MU GETFs and a class of conical 2-designs that go beyond homogeneous designs. Finally, it turns out that generalized equiangular measurements are the most general conical 2-designs constructed from symmetric measurements.

In further research, it would be interesting to present a full analysis of conical 2-designs. In particular, one could provide general construction methods and more examples of inhomogeneous designs. Another open problem is to generalize the results of refs. \cite{GEAM_coherence} and \cite{GEAM_Pmaps} to wider classes of conical designs. This would help to establish whether the following properties extend to all conical 2-designs: (i) a linear relation between the index of coincidence and the purity of a mixed state, (ii) the existence of a family of measures that depend only on $\kappa_\pm$, (iii) the sum of squared elements being proportional to the identity operator. 

Projective 2-designs and their generalizations find wide applications in many subfields of quantum information. In~the foundations of quantum theory, they are used to define uncertainty relations in terms of entropies~\cite{Rastegin5,Ivanovic2,Sanchez,ChenFei,Rastegin6}. Nowadays, entropic uncertainty relations are used in most quantum cryptographic protocols, including quantum key distribution and two-party quantum cryptography~\cite{Coles,Koashi,Wehner}. In~quantum entanglement, projective and conical 2-designs play an important role in the detection and quantification of entangled states via entanglement witnesses~\cite{EW-SIC,MUBs,bound_ent,Li,MUM_purity,EW-2MUB} and criteria based on the correlation matrix~\cite{ESIC,Spengler,ChenMa,ChenLi}. Entanglement is a precious quantum resource for the purposes of quantum communication and information processing, quantum computation, and~other modern technologies~\cite{Nielsen,HHHH}. Its strength lies in ensuring better performance in many quantum tasks, like teleportation~\cite{Masanes} and channel discrimination~\cite{BaeDarek}. Another important problem is the reconstruction of a quantum state from a finite number of measurements, which in turn allows for observable estimation. This is known as state tomography and is often performed using SIC POVMs or mutually unbiased bases~\cite{ZhuEnglert,QST1,QST2,QST3,QST4,QST5,QST6,QST7}. Hence, if~special classes of conical 2-designs find relevance in the aforementioned fields, one could assume that this applicational prowess could straightforwardly extend to the entire family of conical~2-designs.

\section{Acknowledgements}

This research was funded in whole or in part by the National Science Centre, Poland, Grant number 2021/43/D/ST2/00102. For the purpose of Open Access, the author has applied a CC-BY public copyright licence to any Author Accepted Manuscript (AAM) version arising from this submission.

\appendix

\section{Properties of generalized equiangular tight frames}

To derive the values of $a$ and $c$, we use the fact that the elements of a GETF sum up to $\gamma\mathbb{I}_d$. This way, we obtain
\begin{equation}\label{tra}
d\gamma=\Tr(\gamma\mathbb{I}_d)=\sum_{k=1}^M\Tr(P_k)=Ma
\end{equation}
and
\begin{equation}\label{tra2}
a\gamma=\gamma\Tr(P_k)=\Tr(P_k\gamma\mathbb{I}_d)=\Tr(P_k^2)+\sum_{\ell\neq k} \Tr(P_kP_\ell)=a^2b+(M-1)a^2c,
\end{equation}
from which eq. (\ref{ac}) immediately follows.

The lower bound on $b$ in eq. (\ref{brange}) is calculated from the Cauchy-Schwarz inequality. It states that $[\Tr(AB)]^2\leq\Tr(A^2)\Tr(B^2)$ for any Hermitian operators $A$ and $B$, where the equality holds for $A=B$. In our case,
\begin{equation}\label{CS1}
a^2=[\Tr(P_k\mathbb{I}_d)]^2\leq\Tr(P_k^2)\Tr(\mathbb{I}_d^2)=da^2b.
\end{equation}
We exclude the equality in eq. (\ref{CS1}) because then all $P_k$ are linearly proportional to $\mathbb{I}_d$. The upper bound follows from $\Tr(P_k^2)\leq[\Tr(P_k)]^2$ and $\Tr(P_kP_\ell)\geq 0$, which are equivalent to $b\leq 1$ and $c\geq 0$, respectively.

Now, take $M$ real numbers $r_k$, $k=1,\ldots,M$. If
\begin{equation}\label{IC}
\sum_{k=1}^Mr_kP_k=0
\end{equation}
implies that all $r_k=0$, then $P_k$ are linearly independent operators. By taking the trace of eq. (\ref{IC}), we find that
\begin{equation}
\Tr\left(\sum_{k=1}^Mr_kP_k\right)=a\sum_{k=1}^Mr_k=0.
\end{equation}
If we first multiply both sides of eq. (\ref{IC}) by $P_\ell$ and then calculate the trace, we obtain
\begin{equation}
\Tr\left(\sum_{k=1}^Mr_kP_kP_\ell\right)=a^2br_\ell+a^2c\sum_{k\neq\ell}r_k
=a^2\left[(b-c)r_\ell+c\sum_{k=1}^Mr_k\right]=0.
\end{equation}
Therefore, due to $a>0$ and $b\neq c$, we finally have $r_\ell=0$.

\section{Linearly independent subset of mutually unbiased GETFs}

From definition, a set of MU GETFs cannot consist of linearly independent operators because of the constraints $\sum_{k=1}^{M_\alpha}P_{\alpha,k}=\gamma_\alpha\mathbb{I}_d$ for every GETF $\mathcal{P}_\alpha$. However, if we take $M_\alpha-1$ operators from each GETF, then they can be linearly independent from each other and the identity operator $\mathbb{I}_d$, provided that some additional conditions are met. There can be at most $d^2$ linearly independent operators on $\mathbb{C}^d$, and hence the number of all $P_{\alpha,k}$ is bounded by $|\mathcal{P}|=\sum_{\alpha=1}^NM_\alpha\leq d^2-1+N$. Moreover, it is assumed that $P_{\alpha,k}\neq\gamma_\alpha\mathbb{I}_d$, so that the number of GETF elements is $M_\alpha\geq 2$.

Now, let us define
\begin{equation}\label{PP}
\mathbb{P}=r_0\mathbb{I}_d+\sum_{\alpha=1}^N\sum_{k=1}^{M_\alpha-1}r_{\alpha,k}P_{\alpha,k}.
\end{equation}
From basic algebra, we know that if $\mathbb{P}=0$ implies $r_0=r_{\alpha,k}=0$, then the elements of $\{\mathbb{I}_d,\,P_{\alpha,k};\,k=1,\ldots,M_\alpha-1;\,\alpha=1,\ldots,N\}$ are linearly independent. Assuming that $\mathbb{P}=0$, we calculate
\begin{equation}\label{trp}
\Tr(\mathbb{P})=dr_0+\sum_{\alpha=1}^Na_\alpha\sum_{k=1}^{M_\alpha-1}r_{\alpha,k}=0,
\end{equation}
and also, for any $P_{\alpha,k}$ with $k=1,\ldots,M_\alpha-1$,
\begin{equation}\label{22}
\begin{split}
\Tr(\mathbb{P}P_{\alpha,k})&=a_\alpha r_0
+a_\alpha^2(b_\alpha-c_\alpha)r_{\alpha,k}
+a_\alpha^2(c_\alpha-f)\sum_{\ell=1}^{M_\alpha-1}r_{\alpha,\ell}
+a_\alpha f\sum_{\beta=1}^Na_\beta\sum_{\ell=1}^{M_\beta-1}r_{\beta,\ell}\\&
=a_\alpha^2(b_\alpha-c_\alpha)\left[r_{\alpha,k}
-\frac{1}{M_\alpha}\sum_{\ell=1}^{M_\alpha-1}r_{\alpha,\ell}\right]
=0,
\end{split}
\end{equation}
where we made use of eq. (\ref{trp}) and the fact that
\begin{equation}\label{caf2}
c_\alpha-f=-\frac{b_\alpha-c_\alpha}{M_\alpha}.
\end{equation}
Next, if one takes the sum over all $k=1,\ldots,M_\alpha-1$ in eq. (\ref{22}), it results in
\begin{equation}\label{SS}
\sum_{k=1}^{M_\alpha-1}\Tr(\mathbb{P}P_{\alpha,k})
=\frac{a_\alpha^2(b_\alpha-c_\alpha)}{M_\alpha}\sum_{k=1}^{M_\alpha-1}r_{\alpha,k}
=0,
\end{equation}
and therefore $\sum_{k=1}^{M_\alpha-1}r_{\alpha,k}=0$. Substituting this into eqs. (\ref{trp}) and (\ref{22}) gives $r_0=0$ and $r_{\alpha,k}=0$, respectively.

\section{Admissible range of $S$}

Using the formulas for $a_\alpha$ and $c_\alpha$ from eq. (\ref{ac2}), we calculate
\begin{equation}
S_\alpha=a_\alpha^2(b_\alpha-c_\alpha)=\frac{d\gamma_\alpha^2}{M_\alpha}\frac{db_\alpha-1}{M_\alpha-1}.
\end{equation}
Now, through performing linear operations on eq. (\ref{brange2}) that do not change the inequality sign, it is straightforward to show that
\begin{equation}
0<S_\alpha\leq\gamma_\alpha^2\min\left\{\frac{d(d-1)}{M_\alpha(M_\alpha-1)},\frac{d}{M_\alpha}\right\}.
\end{equation}
The range for $S_\alpha\equiv S$ is recovered when one takes the minimization of the right hand-side of this inequality over all $\alpha$.

\section{Trace parameters for conical 2-designs in Proposition 6}

From eq. (\ref{trrr}), we know that, for all $k$ and $\alpha$,
\begin{equation}\label{gen}
F(\alpha,k)\equiv(x_{\alpha,k}-\kappa_-)R_{\alpha,k}
+\sum_{\ell\neq k}y_{\alpha,k,\ell}R_{\alpha,\ell}
+\sum_{\beta\neq\alpha}\sum_{\ell=1}^{M_\beta}z_{\alpha,k;\beta,\ell}
R_{\beta,\ell}-w_{\alpha,k}\kappa_+\mathbb{I}_d=0.
\end{equation}
To solve this equation, we need to rewrite it using only $d^2$ linearly independent operators. 
Recall that the operators $R_{\alpha,M_\alpha}$ are linearly dependent on $\mathbb{I}_d$ and $R_{\alpha,k}$, $k=1,\ldots,M_\alpha-1$, through eq. (\ref{kappas}). Hence,
\begin{equation}\label{RaMa}
R_{\alpha,M_\alpha}=\frac{1}{w_{\alpha,M_\alpha}}\left(\kappa_\alpha\mathbb{I}_d
-\sum_{k=1}^{M_\alpha-1}w_{\alpha,k}R_{\alpha,k}\right).
\end{equation}
Applying eq. (\ref{RaMa}) into eq. (\ref{gen}) results in two formulas:
for all $k=1,\ldots,M_\alpha-1$,
\begin{equation}\label{gen1}
\begin{split}
F(\alpha,k)=&\mathbb{I}_d\left(-w_{\alpha,k}\kappa_++y_{\alpha,k,M_\alpha}\frac{\kappa_\alpha}
{w_{\alpha,M_\alpha}}+\sum_{\beta\neq\alpha}z_{\alpha,k;\beta,M_\beta}\frac{\kappa_\beta}
{w_{\beta,M_\beta}}\right)+R_{\alpha,k}\left(x_{\alpha,k}-\kappa_--y_{\alpha,k,M_\alpha}
\frac{w_{\alpha,k}}{w_{\alpha,M_\alpha}}\right)\\&+\sum_{\ell\neq k,M_\alpha}
R_{\alpha,\ell}\left(y_{\alpha,k,\ell}-y_{\alpha,k,M_\alpha}\frac{w_{\alpha,\ell}}
{w_{\alpha,M_\alpha}}\right)+\sum_{\beta\neq\alpha}\sum_{\ell=1}^{M_\beta-1}
\left(z_{\alpha,k;\beta,\ell}-z_{\alpha,k;\beta,M_\beta}\frac{w_{\beta,\ell}}
{w_{\beta,M_\beta}}\right),
\end{split}
\end{equation}
and, for $k=M_\alpha$,
\begin{equation}\label{gen2}
\begin{split}
F(\alpha,M_\alpha)=&\mathbb{I}_d\left(-w_{\alpha,M_\alpha}\kappa_++(x_{\alpha,M_\alpha}-\kappa_-)
\frac{\kappa_\alpha}
{w_{\alpha,M_\alpha}}+\sum_{\beta\neq\alpha}z_{\alpha,M_\alpha;\beta,M_\beta}\frac{\kappa_\beta}
{w_{\beta,M_\beta}}\right)\\&+\sum_{\ell=1}^{M_\alpha-1}
R_{\alpha,\ell}\left(y_{\alpha,M_\alpha,\ell}-(x_{\alpha,M_\alpha}-\kappa_-)\frac{w_{\alpha,\ell}}
{w_{\alpha,M_\alpha}}\right)+\sum_{\beta\neq\alpha}\sum_{\ell=1}^{M_\beta-1}
\left(z_{\alpha,M_\alpha;\beta,\ell}-z_{\alpha,M_\alpha;\beta,M_\beta}\frac{w_{\beta,\ell}}
{w_{\beta,M_\beta}}\right).
\end{split}
\end{equation}
Now, the solution of $F(\alpha,k)=0$ for all $k=1,\ldots,M_\alpha$ and $\alpha=1,\ldots,N$ is equivalent to all the real coefficients that multiply each operators vanishing. In other words,
\begin{equation*}
\left\{\begin{alignedat}{3}
z_{\alpha,k;\beta,\ell}&=z_{\alpha,k;\beta,M_\beta}\frac{w_{\beta,\ell}}{w_{\beta,M_\beta}},\qquad &&{\rm(i.a)}\\
w_{\alpha,k}\kappa_+&=y_{\alpha,k,M_\alpha}\frac{\kappa_\alpha}{w_{\alpha,M_\alpha}}
+\sum_{\beta\neq\alpha}z_{\alpha,k;\beta,M_\beta}\frac{\kappa_\beta}{w_{\beta,M_\beta}},\qquad &&{\rm(ii.a)}\\
y_{\alpha,k,M_\alpha}&=(x_{\alpha,k}-\kappa_-)\frac{w_{\alpha,M_\alpha}}{w_{\alpha,k}},\qquad &&{\rm(iii.a)}\\
y_{\alpha,k,j}&=y_{\alpha,k,M_\alpha}\frac{w_{\alpha,j}}{w_{\alpha,M_\alpha}},\qquad &&{\rm(iv.a)}
\end{alignedat}\right.
\end{equation*}
and
\begin{equation*}
\left\{\begin{alignedat}{3}
z_{\alpha,M_\alpha;\beta,\ell}&=z_{\alpha,M_\alpha;\beta,M_\beta}\frac{w_{\beta,\ell}}{w_{\beta,M_\beta}},\qquad &&{\rm(i.b)}\\
w_{\alpha,M_\alpha}\kappa_+&=(x_{\alpha,M_\alpha}-\kappa_-)\frac{\kappa_\alpha}{w_{\alpha,M_\alpha}}
+\sum_{\beta\neq\alpha}z_{\alpha,M_\alpha;\beta,M_\beta}\frac{\kappa_\beta}{w_{\beta,M_\beta}},\qquad &&{\rm(ii.b)}\\
y_{\alpha,M_\alpha,j}&=(x_{\alpha,M_\alpha}-\kappa_-)\frac{w_{\alpha,j}}{w_{\alpha,M_\alpha}},\qquad &&{\rm(iii.b)}
\end{alignedat}\right.
\end{equation*}
where $k,j=1,\ldots,M_\alpha-1$, $\ell=1,\ldots,M_\beta-1$, and $\beta\neq\alpha$. Now, eqs. (i.a) and (i.b), together with eq. (\ref{cc3}), give
\begin{equation}
\bigforall_{\alpha,k,\beta\neq\alpha,\ell}\quad z_{\alpha,k;\beta,\ell}=\frac 1d w_{\alpha,k}w_{\beta,\ell}.
\end{equation}
Next, if we input this result and eq. (iii.b) into eqs. (ii.b) and (ii.a), we recover
\begin{equation}
\bigforall_{\alpha,j<M_\alpha}\quad y_{\alpha,j,M_\alpha}
=\frac{\kappa_\alpha-\kappa_-}{d\kappa_\alpha}w_{\alpha,j}w_{\alpha,M_\alpha}.
\end{equation}
Combining the above formula with eq. (iv.a) returns the general relation
\begin{equation}
\bigforall_{\alpha,k,j\neq k}\quad y_{\alpha,k,j}
=\frac{\kappa_\alpha-\kappa_-}{d\kappa_\alpha}w_{\alpha,k}w_{\alpha,j}.
\end{equation}
Finally, substituting the results for $y_{\alpha,k,j}$ into eqs. (iii.a) and (iii.b), one arrives at
\begin{equation}
\bigforall_{\alpha,k}\quad x_{\alpha,k}
=\kappa_-+\frac{\kappa_\alpha-\kappa_-}{d\kappa_\alpha}w_{\alpha,k}^2.
\end{equation}

\section{Consequences of semi-positivity of $R_{\alpha,k}$}

Conical 2-designs are constructed from positive operators $R_{\alpha,k}$, and hence $\Tr(R_{\alpha,k})=w_{\alpha,k}>0$ as well as
\begin{align}
\Tr(R_{\alpha,k}R_{\alpha,\ell})&=y_{\alpha,k,\ell}
=\frac{\kappa_\alpha-\kappa_-}{d\kappa_\alpha}w_{\alpha,k}w_{\alpha,j}\geq 0,
\qquad \ell\neq k,\label{con1}\\
\Tr(R_{\alpha,k}R_{\beta,\ell})&=z_{\alpha,k;\beta,\ell}=\frac 1d w_{\alpha,k}w_{\beta,\ell}\geq 0,\qquad \beta\neq\alpha.\label{con2}
\end{align}
Note that eq. (\ref{con2}) immediately holds for $w_{\alpha,k}>0$, whereas eq. (\ref{con1}) is satisfied under the additional condition that $\kappa_\alpha\geq\kappa_-$. The admissible range for
\begin{equation}
\Tr(R_{\alpha,k}^2)=x_{\alpha,k}=\kappa_-+\frac{\kappa_\alpha-\kappa_-}{d\kappa_\alpha}w_{\alpha,k}^2
\end{equation}
is determined by
\begin{equation}\label{kwadrat}
\Tr(R_{\alpha,k}^2)\leq[\Tr(R_{\alpha,k})]^2 \qquad\Longleftrightarrow\qquad x_{\alpha,k}\leq w_{\alpha,k}^2
\end{equation}
and the Cauchy-Schwarz inequality
\begin{equation}\label{CS3}
w_{\alpha,k}^2=[\Tr(R_{\alpha,k}\mathbb{I}_d)]^2\leq\Tr(R_{\alpha,k}^2)\Tr(\mathbb{I}_d^2)
=dx_{\alpha,k}.
\end{equation}
Eq. (\ref{kwadrat}) simplifies to
\begin{equation}
w_{\alpha,k}^2\geq \frac{d\kappa_-\kappa_\alpha}{(d-1)\kappa_\alpha+\kappa_-},
\end{equation}
and from eq. (\ref{CS3}) we get
\begin{equation}
w_{\alpha,k}^2<d\kappa_\alpha,
\end{equation}
where the equality is excluded because it holds only for $R_{\alpha,k}=w_{\alpha,k}\mathbb{I}_d/d$.

\bibliography{C:/Users/cyndaquilka/OneDrive/Fizyka/bibliography}
\bibliographystyle{C:/Users/cyndaquilka/OneDrive/Fizyka/beztytulow2}

\end{document}